
\documentclass[com,crcready]{iosart2x}

\usepackage{amsmath}
\usepackage{amsfonts}
\IfFileExists{pdfsync.sty}{\usepackage{pdfsync}}{}
\usepackage{hyperref}

%

\usepackage{amssymb}
\providecommand\LANGUE[2]{#1}
\renewcommand{\restriction}{\mathord{\upharpoonright}}
\providecommand\spnewtheorem[4]{\newtheorem{#1}{#2}}
\spnewtheorem{definition}{Definition}{\bfseries}{\rmfamily}
\spnewtheorem{theorem}{\LANGUE{Theorem}{Théorème}}{\bfseries}{\rmfamily}
\spnewtheorem{remark}{\LANGUE{Remark}{Remarque}}{\bfseries}{\rmfamily}
\spnewtheorem{corollary}{\LANGUE{Corollary}{Corollaire}}{\bfseries}{\rmfamily}
\spnewtheorem{example}{\LANGUE{Example}{Exemple}}{\bfseries}{\rmfamily}
\spnewtheorem{proposition}{Proposition}{\bfseries}{\rmfamily}

%
%
%



\usepackage[utf8]{inputenc}
\usepackage[T1]{fontenc}



\begin{document}

\begin{frontmatter}

\newcommand\myaddress{ 
LIX,  Ecole Polytechnique, Institut Polytechnique de Paris, France 
}

\newcommand\hisaddress{University Paris-Est Créteil Val de Marne, LACL 
}

\title{Set Descriptive Complexity of Solvable Functions} 
\runtitle{Set Descriptive Complexity of Solvable Functions}

\begin{aug}
\author[A]{\inits{}\fnms{Olivier} \snm{Bournez}\ead[label=e1]{olivier.bournez@lix.polytechnique.fr}}
\author[B]{\inits{}\fnms{Riccardo} \snm{Gozzi}\ead[label=e2]{ilgozzi@mail.com}}
\address[A]{LIX, \orgname{Ecole Polytechnique, Institut Polytechnique de Paris} 
 \cny{France}\printead[presep={\\}]{e1}}
\address[B]{LACL,  \orgname{University Paris-Est Créteil Val de Marne,},
 \cny{France}\printead[presep={\\}]{e2}}
\address[C]{Department first, \orgname{University or Company name},
Abbreviate US states, \cny{Country}\printead[presep={\\}]{e3}}
\end{aug}

%

%

\maketitle

\newcommand{\interior}[1]{%
  {\kern0pt#1}^{\mathrm{o}}%
}
\newcommand\PRIMOF[1]{\tu f_{[#1]}}

\begin{abstract}
In a recent article, we introduced and studied a precise class of dynamical systems called \emph{solvable systems}. These systems present a dynamic ruled by discontinuous ordinary differential equations with \emph{solvable} right-hand terms and unique evolution. They correspond to a class of systems for which a transfinite method exist to compute the solution. We also presented several examples including a nontrivial one whose solution yields, at an integer time, a real encoding of the halting set for Turing machines; therefore showcasing that the behavior of solvable systems might describe ordinal Turing computations. 

In the current article, we study in more depth solvable systems, using tools from descriptive set theory. 
By establishing a correspondence with the class of well-founded trees, we construct a coanalytic ranking over the set of solvable functions and discuss its relation with other existing rankings for differentiable functions, in particular with the Kechris-Woodin, Denjoy and Zalcwasser ranking.  We prove that our ranking is unbounded below the first uncountable ordinal.

\end{abstract}

\end{frontmatter}

\section{Introduction} 

The scope of this paper is to explore the definability properties of solutions of dynamical systems defined by (possibly discontinuous) Ordinary Differential Equations (ODEs), as well as investigate their potential connection with models of computations. 

\paragraph{Polynomial ODEs and models of computations} 
In a broad sense, this investigation is related to analog models of computation. It  started with Claude Shannon, when he first introduced his \emph{General Purpose Analog Model of computation} (GPAC) \cite{Sha41} to provide a proper theoretical framework for characterizing functions computable by integrator machines called differential analyzers. The analog model of computation that was founded this way, featuring continuous-time and space, was initially suspected to include all differential algebraic functions. It was later proved in \cite{GC03} that a polished version of such a model described precisely all functions that could be obtained as solutions of Initial Value Problems (IVPs) involving polynomial ODEs. 

The importance of polynomial ODEs for modeling real-world dynamics is well known across multiple scientific disciplines, especially physics, so their pivoting role within analog computation showcased in \cite{GC03} must sound logical to the expert ear. Nonetheless, the correspondence between polynomial ODEs and discrete computation demonstrated in \cite{TAMC06} came as a greater surprise. Indeed, the authors of \cite{TAMC06} successfully proved the existence of an interesting equivalence between the GPAC model and computable analysis by constructively designing analog simulations of Turing machines computations capable of generating every computable function over the reals. This result prompted the development of a rich research endeavor that focused on identifying, within the context of this established equivalence, the right analog descriptions for known complexity classes such as FP \cite{JournalACM2017}, FEXPTIME \cite{GozziGraca2022} or FPSPACE \cite{BGDPRiccardo2022}. 

The philosophy of all the above investigations was to extend the meaning of the equivalence in \cite{TAMC06} by subdividing it into precise subclasses concerning complexity stratifications. 

\paragraph{Computing with more general ODEs}
In a recent article, \cite{StacsBournezGozzi2024} we chose a radically different approach by looking at an alternative implication of the result. Instead of inspecting the consequences that could spread \emph{from} the equivalence, we decided to analyze the assumptions that lead \emph{to} the equivalence. One key requirement leading to \cite{TAMC06} was the possibility of only making use of polynomial ODEs to build the continuous-time simulations of the Turing machines considered. This requirement followed directly from the necessity of making use of GPAC generable functions in the sense of \cite{GC03}. Instead, one of the goals of \cite{StacsBournezGozzi2024} was to understand the repercussions of broadening the spectrum of the IVPs and ODEs considered while still maintaining the spirit of \cite{TAMC06} focused on continuous-time simulations of discrete dynamics. In other words, one purpose of \cite{StacsBournezGozzi2024}  was to set the premises to address the following question: is it possible to design continuous-time dynamical systems ruled by (hence non-polynomial) ODEs that are capable of simulating ordinal Turing computations in a precise mathematical sense? 

We decided to tackle this question from a specific angle: first, we unequivocally defined what we meant by ordinal Turing computations by making use of concepts from set descriptive theory and ordinal computing. Second, we identified which class of ODEs should be used to obtain simulations with this desired behavior, and we presented strong reasons to support a positive answer to the question raised above. 

More precisely, we called this class the class of \emph{solvable} ODEs and we proved that IVPs involving such dynamics can always be solved analytically via transfinite recursion bounded by the first uncountable ordinal. 

\paragraph{About solvable ODEs}
Solvable IVPs as we defined them allow for the presence of discontinuities in their right-hand terms, enabling their solutions to acquire transfinite properties, while at the same featuring enough regularity to be successfully singled out within the domain using only the common tools from analysis. One of the key characteristics of solvable IVPs is that they exhibit unique, everywhere differentiable solutions by hypothesis, making them stand out from the most common approaches to the theory of discontinuous ODEs such as the ones listed in \cite{FilippovBook,aubin2012differential,deimling2011multivalued}. This feature naturally promotes them as perfect candidates for the type of characterization that we are seeking. We produced another promising step in this direction with one of the examples discussed in \cite{StacsBournezGozzi2024}, where we presented the details of one solvable system whose solution assumes, at an integer time, a real value encoding the halting set for Turing machines, therefore placing the first milestone in the direction of achieving higher Turing jumps in the hyperarithmetical hierarchy.

\paragraph{Relations to other rankings and hiearchies} 
This type of achievement has to be intended according to the methodology explained by Dougherty and Kechris for the context of Denjoy integration, who showed that each hyperarithmetical real can be obtained through Denjoy integral of a suitable derivative \cite{dougherty1991complexity}. In analogy, \cite{StacsBournezGozzi2024} achieved the purpose of selecting the right class of ODEs to be used to obtain the same type of characterization of ordinal computing, but this time in the context of dynamical systems. This intuition is backed by highlighting substantial evidence regarding the unique definability properties of solutions of solvable systems. An extended version of \cite{StacsBournezGozzi2024}, with lengthier explanations and complete proofs, can be found in \cite{bournez2024solvable}. 

\paragraph{Contributions} 
In this article, we start from these premises of \cite{StacsBournezGozzi2024}  and fill several important gaps that were left open there, further fortifying the strength of the argument supporting solvable ODEs as the right candidates for an alternative description of transfinite computation. 

After refreshing the results from \cite{StacsBournezGozzi2024}, we conduct a formal analysis of real solvable functions describing their set descriptive complexity as a subset of the set of differentiable functions. 

More precisely, by introducing an ordinal rank within the class of solvable functions, we rigorously demonstrate that for each countable ordinal there exists at least one corresponding solvable dynamic, meaning that the whole class of solvable functions is fully populated and unbounded below the first uncountable ordinal. Notably, once again in accord with the results from \cite{dougherty1991complexity} on Denjoy integration, the latter implies that, in general, the operation of solving discontinuous ODEs is not Borel as soon as solvable functions are included in the picture, and therefore the totality of countable ordinals is necessary to perform such operation. To show this unboundedness result, we twist some of the techniques applied in \cite{westrick2014lightface} to differentiable functions and we construct a Borel function that maps each well-founded tree to one solvable function with solvable ranking linearly dependent on the limsup ranking of the corresponding tree. 

Another contribution of this paper is to expand the scope of the research field promoted by works such as \cite{kechris1986ranks}, \cite{Ra91}, and \cite{ki1997denjoy}, where the authors have analyzed the relation between many of the known rankings for differentiable functions. In this context, the rankings that have been discussed are the Kechris-Woodin ranking and the Denjoy ranking as in \cite{kechris1986ranks}, and the Zalcwasser ranking as in \cite{AK87}. In \cite{ki1997denjoy} is proved how the Denjoy and the Zalcwasser rankings are incomparable while being both dominated by the Kechris-Woodin ranking. 

In this article we manifest that restricting the focus to one-dimensional solvable functions provides interesting comparisons with the other rankings, proving that even when only differentiable functions are concerned, the solvable ranking stands out as more informative and strictly more complex than any of the other three rankings. More precisely we show that for each differentiable, real solvable function its solvable ranking is at least equal to its Kechris-Woodin ranking or greater. We argue that this result follows as a natural consequence of the fact that our solvable ranking is designed based on the transfinite method we developed in \cite{StacsBournezGozzi2024} for solving systems of ODEs, and therefore generalizes the other rankings which are more tailored over integration and differentiability properties of the functions considered. 

Finally, we show that the set of solvable functions is coanalytic as a subset of the space of real continuous functions and that the solvable ranking is a coanalytic norm on such a set. This appears as a clear difference with the Denjoy ranking, despite the clear similarities in the transfinite methodology that defines them. All these results just mentioned demonstrate the unique interesting properties of the ranking we introduce in this paper and outline the potential for its applications in describing the complexity of discontinuous dynamical systems in higher dimensions as well.   

\subsection{Structure of the paper}

The paper is organized in the following manner: Section \ref{sec:prel} presents the preliminaries for the article, with a subsection for the notation, one for the hyperarithmetical hierarchy, trees and coanalytic norms and one for ordinary differential equations; Section \ref{sec:solvable} introduces the class of solvable functions and describes the properties of solutions of solvable systems while presenting a few notable examples; Section \ref{sec:main} contains the main results of the paper: it introduces the solvable ranking, proves its most interesting properties and discusses the comparison with other existing rankings for differentiable functions; finally, Section \ref{sec:conclusions} concludes the paper, mentioning possible directions for future work.

\section{Preliminaries}
\label{sec:prel}

\subsection{Notation}

We introduce the notation and the main definitions used throughout this work. The standard symbols $\mathbb{N}$, $\mathbb{N}_0$, $\mathbb{R}$ and $\mathbb{Q}$ stand for the set of natural, natural including zero, real and rational numbers respectively while $\mathbb{R}^+$ and $\mathbb{Q}^+$ represent the positive reals and the positive rationals. When making use of the norm operator on Euclidean spaces, we always consider Euclidean norms if not specified otherwise. We use $\langle n_1,\hdots,n_k \rangle$ to denote a single integer which represents the tuple $( n_1,\hdots,n_k)$ according to some standard computable encoding. We call a space Polish space if it is a separable, completely metrizable topological space. We call $C([0,1])$ the Polish space of continuous real functions with the uniform metric. We refer to a compact domain of a topological space as a nonempty connected compact subset of such space. Given a metric space $X$ we indicate with the notation $d_X$ the distance function in such space and with the notation $B_X(x, \delta)$ the open ball centered in $x \in X$ with radius $\delta >0$. By default we describe as open \emph{rational ball} or as open \emph{rational box} an open ball or box with rational parameters. Precisely, an open rational box $B$ is a set of the form $(a_1, b_1) \times \hdots \times (a_r, b_r) \subset \mathbb{R}^r$ for some $r \in \mathbb{N}$ where $a_i, b_i \in \mathbb{Q}$ for $i = 1, \hdots, r$. We indicate with the notations $\operatorname{diam}(B)$ and $\operatorname{rad}(B)$ its diameter and its radius. Moreover, given a function $f: [a,b] \to \mathbb{R}^r$ for some $a,b \in \mathbb{R}, a<b$ and some $r \in \mathbb{N}$ we indicate with the notation $f': [a,b] \to \mathbb{R}^r$ the derivative of such function, where the derivative on the extremes $a$ and $b$ is defined as the limit on the left and on the right respectively. For any function $f \in C([0,1])$ and any $[a, b] \subseteq [0,1]$ we write $f[a, b]$ to denote the function which is identically 0 outside of $[a, b]$, and instead values $f[a, b](x)=(b-a) f\left(\frac{x-a}{b-a}\right)$ for all $x \in[a, b]$. Note that if $f$ is continuous and $f(0)=f(1)=0$, then $f[a, b]$ is continuous; it is computable when $f, a$, and $b$ are and differentiable when $f$ is differentiable and $f^{\prime}(0)=f^{\prime}(1)=0$. Given a function $f: X \to Y$ and a set $K \subseteq X$ we indicate with the notation $f \restriction_K$ the restriction of function $f$ to the set $K$, i.e.\ $f \restriction_K$ is the function from $K$ to $Y$ defined as $f \restriction_K (x)= f(x)$. If $A$ and $B$ are two sets, we refer to the set difference operation using the symbol $A \setminus B$ and we indicate with the notation $A+B$ the Minkowski sum of set $A$ with set $B$. The expression $\operatorname{cl}(A)$ indicates the closure of $A$, $\operatorname{int}(A)$ the interior of $A$ and $A^c$ its complement. The notation $\emptyset$ stands for the empty set, while the notation $\omega_1$ stands for the first uncountable ordinal number. Given a property of a function $f: X \to Y$, we say that this property is satisfied \emph{almost everywhere} if the property is satisfied on $X \setminus D$, where $D$ is a set with Lebesgue measure equal to zero. We use the symbol $\phi_e$ to denote to the eth Turing functional acoordinal to some G\"{o}del enumeration. One reducibility is denoted by $\leq_1$. Given a set $X \subset \mathbb{N}$, the jump of $X$ is written $X'$ and the nth jump of $X$ is written $X^{(n)}$.

\subsection{Hyperarithmetical hierarchy, trees and coanalytic norms}

We describe the process of \emph{transfinite recursion} as the process that for each ordinal $\alpha$ associates with $\alpha$ an object that is described in terms of objects already associated with ordinals $\beta < \alpha$ \cite{SA17}. We use the expression \emph{transfinite recursion up to $\alpha$} if the process associates an object for all ordinals $\beta <\alpha$.

We assume our reader to have some basic familiarity with the boldface and lightface hierarchies: see e.g \cite{moschovakis2009descriptive,SA17}. 

To recall the hyperarithmetical hierarchy we make use of the notion of constructive ordinals, which we define by means of the concept of Kleene's $\mathcal{O}$.

\begin{definition}[Kleene's $\mathcal{O}$]
Let the relation $<_{\mathcal{O}}$ on the natural numbers be the least relation closed under the following properties: 
\begin{itemize}
\item $1 <_{\mathcal{O}} 2$
\item If $a <_{\mathcal{O}} b$ then $b <_{\mathcal{O}} 2^b$
\item If funtion $\phi_e$ is total and $\phi_e(n) <_{\mathcal{O}} \phi_e(n+1)$ for all $n \in \mathbb{N}$ then $\phi_e(n) <_{\mathcal{O}} 3 \cdot 5^e$ for all $n \in \mathbb{N}$ 
\item If $a <_{\mathcal{O}} b$ and $b <_{\mathcal{O}} c$ then $a <_{\mathcal{O}} c$
\end{itemize}
we call the field of this relation Kleene's $\mathcal{O}$. 
\end{definition}

This field has many interesting properties, such as being $\Pi_1^1$ complete, and relation $<_{\mathcal{O}}$ is well-founded. Moreover, for every $a \in \mathcal{O}$ the set defined as $\{ b <_{\mathcal{O}} a\}$ is computably enumerable and well ordered. This implies that for every $a \in \mathcal{O}$ there is a well defined ordinal $\left\vert a \right\vert_{\mathcal{O}}$ such that $\left\vert a \right\vert_{\mathcal{O}}$ is the order type of the set $\{ b <_{\mathcal{O}} a\}$. We call the natural number $a \in \mathcal{O}$ the \emph{ordinal notation} for the ordinal $\left\vert a \right\vert_{\mathcal{O}}$. 

\begin{definition}[Constructive ordinals]
We call an ordinal number \emph{constructive} if it has an ordinal notation in Kleene's $\mathcal{O}$.
\end{definition}

It is clear that there are only countably many constructive ordinals, while each constructive ordinal greater than $\omega$ admits infinitely many possible ordinal notations. We indicate the first nonconstructive ordinal with the symbol $\omega_1^{CK}$, which stands for \emph{the Church-Kleene ordinal}. 

We now recall the definition of the arithmetical hierarchy. We say that a set $X$ is $\Sigma_n$ ($\Pi_n$ respectively) if $X \leq_1 \emptyset^{(n)}$ (respectively $X \leq_1 (\emptyset^{(n)})^c$). The arithmetical hierarchy can be extended to transfinite levels for all constructive ordinals, generating in this way the hyperarithmetical hierarchy. Define the a sequence of sets, one for each ordinal notation in Kleene's $\mathcal{O}$, in the following way: $H_1= \emptyset$, $H_{2^b}=(H_b)^{'}$ and $ H_{3 \cdot 5^{e}}=\{ \langle x, n \rangle : x \in H_{\phi_e(n)} \}$. Then we have: 

\begin{definition}[Hyperarithmetical hierarchy]
Let $\alpha < \omega_1^{CK}$ be an infinite ordinal and let $X \subset \mathbb{N}$. Then $X$ is $\Sigma_\alpha$ if  $X \leq_1 H_{2^a}$ for any $a$ such that $\left\vert a \right\vert_{\mathcal{O}}$. The definition for $\Pi_\alpha$ is done in the same way. 
\end{definition}

We can then extend the notion of Turing jumps to every ordinal less than $\omega_1^{CK}$. This is done in the following way: we first fix a particular (but arbitrary) path $\mathcal{P}$ through $\mathcal{O}$ (A path $\mathcal{P}$ through $\mathcal{O}$ is a subset $ \mathcal{P} \subseteq \mathcal{O}$ such that it is $<_{\mathcal{O}}$ linearly ordered and contains an ordinal notation for each $\alpha < \omega_1^{CK}$). We then define, for all $\alpha < \omega_1^{CK}$, the $\alpha$-Turing jump as the set $\emptyset^{(\alpha)}= H_a$, where $a$ is the unique $a \in \mathcal{P}$ such that $\left\vert a \right\vert_{\mathcal{O}}=\alpha$. 

The hyperarithmetical hierarchy can be considered naturally also in the context of real numbers. Indeed, a real number $x$ is called hyperarithmetical if the set of rationals $\{ q \in \mathbb{Q} : q < x \}$ is hyperarithmetical.

A rank function or norm on a set $P$ is just a map $\varphi: P \rightarrow O R D$, from $P$ into the class of ordinals ORD. This induces a prewellorlering $\leqslant_{\varphi}$ on $P$ defined by:

\begin{equation*}
x \leqslant_{\varphi} y \Longleftrightarrow \varphi(x) \leqslant \varphi(y) 
\end{equation*} 

Two norms $\varphi, \varphi^{\prime}$ on $P$ are equivalent if they induce the same prewellorlerings $\leqslant_{\varphi}=\leqslant_{\varphi^{\prime}}$. We are mainly interested in special norms on $\boldsymbol{\Pi}_1^{1}$ subsets of Polish spaces, these are called \emph{coanalytic norms}.

\begin{definition}[$\boldsymbol{\Pi}_1^1$ norm] 
Given $X$ a Polish space and a $\boldsymbol{\Pi}_1^1$ subset $P$ of $X$ we say that a norm $\varphi: P \rightarrow O R D$ is a $\boldsymbol{\Pi}_1^1$ norm (on $P$) if there is a $\boldsymbol{\Sigma}_1^1$ subset $R$ of $X^2$ and a $\boldsymbol{\Pi}_1^1$ subset $Q$ of $X^2$ (i.e., $R, Q$ are relations on $X$ ) such that:
\begin{equation*}
y \in P \Longrightarrow[(x \in P \land \varphi(x) \leqslant \varphi(y)) \Longleftrightarrow R(x, y) \Longleftrightarrow Q(x, y)]
\end{equation*}
\end{definition}

It is well known that every $\boldsymbol{\Pi}_1^1$ norm is equivalent with one which takes values in $\omega_1$. Indeed, the following proposition holds \cite{kechris2012classical}.

\begin{proposition}
\label{prop:coananorm}

Every $\boldsymbol{\Pi}_1^1$ set $A$ in a Polish space admits a $\boldsymbol{\Pi}_1^1$ norm $\varphi: A \rightarrow \omega_1$. If $\varphi: A \rightarrow \alpha$ is a $\boldsymbol{\Pi}_1^1$ norm, then for each $\xi<\alpha$ let
$$
A_{\xi}=\{x \in A: \varphi(x) \leq \xi\} .
$$

Then $A_{\xi}$ is Borel, $A_{\xi} \subseteq A_\eta$ if $\xi \leq \eta$, and $A=\bigcup_{\xi<\alpha} A_{\xi}$. In particular, every $\boldsymbol{\Pi}_1^1$ set is the union of Borel sets. Also, if $\varphi: A \rightarrow \omega_1$ is a $\boldsymbol{\Pi}_1^1$ norm on a $\boldsymbol{\Pi}_1^1$ but not Borel set, then $\sup \{\varphi(x): x \in A\}=\omega_1$.

\end{proposition}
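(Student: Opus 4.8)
The plan is to reduce the whole statement to the canonical coanalytic set of well-founded trees equipped with its ordinal height, and then read off the stated consequences. Since $A$ is $\boldsymbol{\Pi}_1^1$ in the Polish space, I would first invoke the tree representation of coanalytic sets: there is an assignment $x \mapsto T_x$, sending each point to a tree on $\mathbb{N}$, such that the relation ``$s \in T_x$'' is Borel in $x$ for every finite sequence $s$, and such that $x \in A$ if and only if $T_x$ is well-founded. I then set $\varphi(x) = \rho(T_x)$, where $\rho$ denotes the ordinal height computed by the recursion $\rho_T(s) = \sup\{\rho_T(t) + 1 : t \in T,\ s \subsetneq t,\ |t| = |s| + 1\}$ and $\rho(T) = \rho_T(\emptyset)$. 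Because each $T_x$ with $x \in A$ is a countably branching well-founded tree, its height is a countable ordinal, so $\varphi$ maps $A$ into $\omega_1$, as required by the first assertion.

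The heart of the argument, and the step I expect to be the main obstacle, is to verify that $\varphi$ is a $\boldsymbol{\Pi}_1^1$-norm, that is, to produce the relation $R \in \boldsymbol{\Sigma}_1^1$ and $Q \in \boldsymbol{\Pi}_1^1$ that agree, on every section $y \in A$, with the set $\{x \in A : \varphi(x) \le \varphi(y)\}$. For this I would use the comparison of tree heights by order-preserving maps, relying on the combinatorial fact that, whenever $T_y$ is well-founded, one has $\rho(T_x) \le \rho(T_y)$ exactly when there is a map $h : T_x \to T_y$ that strictly preserves end-extension (i.e.\ $s \subsetneq s'$ implies $h(s) \subsetneq h(s')$). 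The existence of such an $h$ quantifies existentially over an element of a Polish space of functions under a Borel constraint, and therefore yields the $\boldsymbol{\Sigma}_1^1$ relation $R$. The matching $\boldsymbol{\Pi}_1^1$ relation $Q$ comes from the dual comparison: expressing ``$x \in A$ and $\varphi(x) \le \varphi(y)$'' as the negation of ``$\varphi(y) < \varphi(x)$'', where the latter is again witnessed by a strict monotone map in the opposite direction (embedding $T_y$ into a proper part of $T_x$) and is $\boldsymbol{\Sigma}_1^1$. The technical work lies precisely in checking that these two descriptions coincide on well-founded trees, including the boundary behaviour when $T_x$ or $T_y$ is ill-founded; this is the classical content underlying the statement, and I would carry it out following the tree-map comparison of \cite{kechris2012classical}.

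With the norm in hand, the remaining assertions follow quickly. For a fixed $\xi < \alpha$ I may assume, after collapsing $\varphi$ to the rank of its induced prewellordering, that $\xi$ is realized by some $y \in A$ with $\varphi(y) = \xi$; then \[ A_\xi = \{x \in A : \varphi(x) \le \varphi(y)\} = \{x : R(x,y)\} = \{x : Q(x,y)\}. \] The middle set is a section of a $\boldsymbol{\Sigma}_1^1$ relation, hence $\boldsymbol{\Sigma}_1^1$, while the right-hand set is a section of a $\boldsymbol{\Pi}_1^1$ relation, hence $\boldsymbol{\Pi}_1^1$; being simultaneously analytic and coanalytic, $A_\xi$ is Borel by Suslin's theorem. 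The inclusions $A_\xi \subseteq A_\eta$ for $\xi \le \eta$ are immediate from the definition, and $A = \bigcup_{\xi < \alpha} A_\xi$ since every point of $A$ has some ordinal value; this already exhibits $A$ as a union of Borel sets.

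Finally, for the last assertion I would argue by contraposition. Suppose $\varphi : A \to \omega_1$ is a $\boldsymbol{\Pi}_1^1$-norm and that $\beta = \sup\{\varphi(x) : x \in A\} < \omega_1$. Then $A = \bigcup_{\xi < \beta} A_\xi$ is a union of at most countably many (since $\beta$ is countable) Borel sets, and is therefore Borel. Hence if $A$ is $\boldsymbol{\Pi}_1^1$ but not Borel, the supremum of $\varphi$ cannot be a countable ordinal, forcing $\sup\{\varphi(x) : x \in A\} = \omega_1$.
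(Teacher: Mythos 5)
The paper does not actually prove this proposition --- it quotes it as a known result from \cite{kechris2012classical} --- and your blind reconstruction is precisely the classical argument found there: the tree representation of coanalytic sets with the rank norm, the $\boldsymbol{\Sigma}_1^1$ and $\boldsymbol{\Pi}_1^1$ comparison relations witnessed by strictly monotone tree maps (with the ill-founded case handled by mapping along an infinite branch), Suslin's theorem for the Borelness of the initial segments, and contraposition via countable unions for the final boundedness claim. Your proof is correct and essentially the paper's (i.e.\ Kechris's) approach, up to two cosmetic points: in the last step the union should be $A=\bigcup_{\xi\leq\beta}A_\xi$ rather than $\bigcup_{\xi<\beta}A_\xi$ in case the supremum $\beta$ is attained (still a countable union), and your regularization move implicitly uses that the same relations $R,Q$ witness the collapsed norm, which is legitimate because the defining condition of a $\boldsymbol{\Pi}_1^1$ norm depends only on the induced prewellordering.
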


We define trees in the usual way as nonempty subsets of $\mathbb{N}_0^{<\mathbb{N}_0}$ closed under initial sequences. If $m$ is a natural number and $\sigma= \langle n_1, \hdots, n_k \rangle$ is a sequence, $m^\frown \sigma$ and $\sigma ^\frown m$ denote $\langle m, n_1, \hdots, n_k \rangle$ and $\langle n_1, \hdots, n_k, m\rangle$ respectively. If $T$ is a tree, let $T_n$ denote $\{ \sigma : n^\frown \sigma \in T \}$, i.e.\ the nth subtree of $T$. If $T$ is well-founded, $\vert T \vert$ denotes its rank. The set of all well-founded trees, often indicated with $WF$, is a $\boldsymbol{\Pi}_1^{1}$ complete subset of $\mathbb{N}_0^{<\mathbb{N}_0}$. A $\boldsymbol{\Pi}_1^{1}$ subset $A$ of a Polish space $X$ is said to be complete if
for any Polish space $Y$ and any $\boldsymbol{\Pi}_1^{1}$ subset $B$ of $Y$ there is a Borel measurable function $f: Y \to X$ such that $B = f^{-1}(A)$. 

Another notion of ranking for well-founded trees have been introduced in \cite{Wes14} in order to define a $\boldsymbol{\Pi^1_1}$ norm that could resemble the structure of the Cantor-Bendixson derivative. 

\begin{definition}[Limsup rank]
Let $T$ be a well-founded tree, define the \emph{limsup rank} of $T$ as:
\begin{equation*}
\vert T \vert_{ls} = \max \left( \sup_n \vert T_n \vert_{ls} , \left( \limsup_n \vert T_n \vert_{ls}\right) +1 \right)
\end{equation*}
if $T$ is nonempty, and $\vert T \vert_{ls}=0$ if $T$ is empty. 
\end{definition}

Note that reordering the subtrees does not change the limsup rank of the tree. A node can have a rank higher than all its children in one of two situations: either there is no child of maximal rank, or there are infinitely many maximal rank children. Moreover, in \cite{Wes14} it is proved that the limsup rank of any well-founded tree is always a successor ordinal. 

Abstractly, from the fact that $WF$ is a $\boldsymbol{\Pi}_1^{1}$ complete set, it is always possible to assign, given a Polish space $X$ and a $\boldsymbol{\Pi}_1^{1}$ set $A \subseteq X$, a tree $T_x$ to each $x \in X$ such that the function that maps $x \to T_x$ is Borel and $x \in A$ if and only if $T_x \in WF$. Then, the function that maps $x \to \left\vert T_x \right\vert$ (or $x \to \left\vert T_x \right\vert_{ls}$) is a $\boldsymbol{\Pi}_1^{1}$ norm on $A$. 

\subsection{IVPs with ordinary differential equations}

We consider dynamical systems whose evolution is described by ordinary differential equations. Consider an interval $[a,b] \subset \mathbb{R}$, a closed domain $E \subset \mathbb{R}^r$ for some $r \in \mathbb{N}$, a point $y_0 \in E$ and a function $f: E \to \mathbb{R}^r$ such that the dynamical system:

\begin{equation}
\label{eq:pb}
\begin{cases}
y'(t)= f(y(t))\\
y(a)=y_0
\end{cases}
\end{equation}

has at least one solution $y:[a,b] \to \mathbb{R}^r$ with $y([a,b]) \subset E$. Given $y_0$ and $f$, the problem of obtaining one solution in such a setting is called an \emph{Initial Value Problem (IVP)}. The condition $y(a)=y_0$ (or, in short, just the point $y_0$) is referred to as the initial condition of the problem, and function $f$ is referred to as the right-hand term of the problem. In general, there can be multiple valid solutions for the same IVP. Nonetheless, in this work we consider only IVPs whose solution is uniquely defined. Therefore, we refer to function  $y:[a,b] \to E$ satisfying Equation \eqref{eq:pb} as the solution of the problem. 

For IVPs with continuous right-hand terms, several different methods have been developed to obtain the solution analytically, all based on the concept of constructing a sequence of continuous functions converging to the solution. Once it is known that the solution is unique, every sequence considered in each of these methods can be shown to converge to the unique solution. The authors of \cite{collins2009effective} have demonstrated that continuity of the right-hand term together with uniqueness corresponds to a sufficient condition to achieve computability for the solution, where computability has to be intended as the possibility of computing any approximation of such function uniformly on the precision input, as usual in the sense of computable analysis. The procedure is detailed in the description of their so-called \emph{ten thousand monkeys} algorithm. The idea of this algorithm is to exploit the hypotheses of unicity and continuity for enclosing the solution into covers of arbitrarily close rational boxes in $E$. 

Making use of a similar technique, our analysis in \cite{StacsBournezGozzi2024} proves that a weaker condition is sufficient to analytically produce the unique solution even in the presence of highly discontinuous behaviors for the right-hand terms considered. We showed that this is possible as long as we are willing to pay the price of allowing a transfinite procedure into the picture.  This approach differs from the general treatment within the theory of discontinuous ODEs \cite{FilippovBook,aubin2012differential,deimling2011multivalued} where the notion of solution varies greatly depending on the main goal of the methodology involved. Indeed, the almost everywhere equality in Equation \eqref{eq:pb} typical of most approaches allows for more freedom on an alternative interpretation of the solution, according to which behavior is chosen to be desirable on the discontinuity points. In this context instead, ambiguities concerning the concept of solution are not possible, since we require function $y$ in Equation \eqref{eq:pb} to be unique and everywhere defined. 

\section{Solvable functions and examples}
\label{sec:solvable}

In this section we briefly describe the sufficient condition outlined in \cite{StacsBournezGozzi2024} for the right-hand terms of the IVPs considered and present a few examples. The sufficient condition is based on the definition of solvable functions, which intuitively can be described as functions that allow for a precise stratification on their degree of discontinuity. We can quantify this formally by first introducing the following definition:

\begin{definition}[Sequence of $f$-removed sets on $E$]
\label{def:removed}

For some $r,m \in \mathbb{N}$, consider a compact domain $E \subset \mathbb{R}^r$ and a function $f: E \to \mathbb{R}^m$. Let $\{ E_\alpha\}_{\alpha< \omega_1}$ be a transfinite sequence of sets and $\{ f_\alpha \}_{\alpha}$  a transfinite sequence of functions such that $f_\alpha = f \restriction_{E_{\alpha}}: E_{\alpha} \to \mathbb{R}^m$ defined as following:

\begin{itemize}
\item Let $E_0=E$ 
\item For every $\alpha$ let $E_{\alpha+1}=  D_{ f_{\alpha}}$
\item For every $\alpha$ limit ordinal, let $E_\alpha= \displaystyle\cap_{\beta<\alpha} E_\beta $
\end{itemize}

we call the sequence $\{ E_\alpha\}_{\alpha}$ the \emph{sequence of $f$-removed sets} on $E$.

\end{definition}

We remark that since functions in the sequence $\{ f_\alpha \}_{\alpha}$ above are allowed to be defined over disconnected sets, the notion of continuity in the above definition has to be intended for the induced topology relative to $E_{\alpha}$ as a subset of $\mathbb{R}^r$. Moreover, note that it follows from the definition of the sequence that such sequence is decreasing, meaning that $E_\delta \subseteq E_\gamma$ if $\delta > \gamma$ for every $E_\delta$ and $E_\gamma$ in the sequence. 

\begin{definition}[Solvable function] \label{def:solvable}
For some $r,m \in \mathbb{N}$, consider a compact domain $E \subset \mathbb{R}^r$ and a function $f: E \to \mathbb{R}^m$. We say that $f$ is \emph{solvable} if it is a function of class Baire one such that for every closed set $K \subseteq E$ the set of discontinuity points of the restriction $f \restriction_K$ is a closed set.
\end{definition}

We say that a dynamical system or an ordinary differential equation is \emph{solvable} when the right-hand term of the ordinary differential equation is a solvable function. Solvable functions have the following property \cite{StacsBournezGozzi2024}:

\begin{theorem}
\label{thm:lastord}
For some $r,m \in \mathbb{N}$, consider a compact domain $E \subset \mathbb{R}^r$ and a function $f: E \to \mathbb{R}^m$. If $f$ is solvable, then there exists an ordinal $\alpha<\omega_1$ such that $E_{\alpha}= \emptyset$.
\end{theorem}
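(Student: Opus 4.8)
The plan is to treat this as a Cantor--Bendixson-type stabilization argument. The transfinite sequence $\{E_\alpha\}$ is non-increasing, so the whole question reduces to two facts: that the sequence \emph{strictly} decreases at every successor stage as long as it is nonempty, and that a strictly decreasing transfinite sequence of closed subsets of the separable metric space $E$ must have countable length. Combining these forces the sequence to reach $\emptyset$ before $\omega_1$.

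First I would check by transfinite induction that every $E_\alpha$ is closed: $E_0=E$ is compact hence closed; at a successor, $E_{\alpha+1}=D_{f_\alpha}$ is the discontinuity set of $f\restriction_{E_\alpha}$, which is closed directly by the defining solvability hypothesis applied to the closed set $K=E_\alpha$; at a limit, $E_\alpha=\bigcap_{\beta<\alpha}E_\beta$ is an intersection of closed sets. Next comes the crucial strict-decrease step. Since $f$ is of Baire class one, so is each restriction $f_\alpha=f\restriction_{E_\alpha}$ (a pointwise limit of continuous functions restricts to one), and each $E_\alpha$, being a closed subset of the compact metric space $E$, is a nonempty Baire space whenever $E_\alpha\neq\emptyset$. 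By the classical theorem that the discontinuity set of a Baire class one function is meager, $D_{f_\alpha}$ is meager in $E_\alpha$; being also closed it is nowhere dense, hence a proper subset of $E_\alpha$. Equivalently, $f_\alpha$ has a point of continuity in $E_\alpha$, and that point lies in $E_\alpha\setminus E_{\alpha+1}$. Thus $E_{\alpha+1}\subsetneq E_\alpha$ whenever $E_\alpha\neq\emptyset$.

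It then remains to bound the length of a strictly decreasing chain of closed sets. Fix a countable base $\{B_n\}$ for $E$. For each ordinal $\alpha$ at which $E_{\alpha+1}\subsetneq E_\alpha$, choose a witness $x_\alpha\in E_\alpha\setminus E_{\alpha+1}$ and, since $E_{\alpha+1}$ is closed and misses $x_\alpha$, a basic open set $B_{n(\alpha)}$ with $x_\alpha\in B_{n(\alpha)}$ and $B_{n(\alpha)}\cap E_{\alpha+1}=\emptyset$. If $\alpha<\beta$ are two such stages then $x_\beta\in E_\beta\subseteq E_{\alpha+1}$, so $x_\beta\notin B_{n(\alpha)}$ while $x_\beta\in B_{n(\beta)}$; hence $n(\alpha)\neq n(\beta)$, so $\alpha\mapsto n(\alpha)$ is injective into $\mathbb{N}$ and the set of strict-decrease stages is countable. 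Finally I argue by contradiction: if $E_\alpha\neq\emptyset$ for every $\alpha<\omega_1$, then by the previous paragraph every successor stage below $\omega_1$ is a strict-decrease stage, producing uncountably many of them and contradicting the counting bound. Therefore $E_\alpha=\emptyset$ for some $\alpha<\omega_1$.

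I expect the only genuinely delicate point to be the strict-decrease step, where the Baire class one hypothesis is indispensable: without it the restriction $f_\alpha$ could be discontinuous at every point of some nonempty closed $E_\alpha$, the sequence would stall at a nonempty set, and no countable ordinal would empty $E$. The counting lemma, by contrast, is routine and purely topological, relying only on the separability of $E$.
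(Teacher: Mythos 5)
Your proposal is correct and takes essentially the same route as the paper's own argument (which Theorem \ref{thm:lastord} defers to \cite{StacsBournezGozzi2024,bournez2024solvable}): closedness of each $E_\alpha$ via the solvability hypothesis, strict decrease at every nonempty stage via the Baire class one continuity-point theorem on the closed (hence completely metrizable) subspace $E_\alpha$, and termination below $\omega_1$ by the second-countability counting argument. All three steps are carried out correctly, including the two points that genuinely need care: that restrictions of Baire one functions remain Baire one in the relative topology, and that $E_\beta\subseteq E_{\alpha+1}$ for $\alpha<\beta$ in the injectivity argument.
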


It is clear from the above theorem that the definition of solvable functions provides an intuitive way to rank \emph{solvable} systems, where the ranking naturally corresponds to the first countable ordinal that leads to the empty set in the sequence of removed sets. A precise notion for this ranking, as well as a thorough description of its properties, is given in Section \ref{sec:main} and represents the main scope of this paper. 

Solvable right-hand terms together with unicity represent a sufficient condition for solving analytically discontinuous IVPs using an appropriate search method within the domain. The search method is based on the ordinal classification of the discontinuity for the right-hand term according to the theorem above, and therefore one such method is defined uniformly for each countable ordinal. Tuples that are labeled as valid by the method are then used to identify the correct sequence of open rational boxes that approximates the solution. It follows the formal definition of the search method: 

\begin{definition}[$(\alpha)$Monkeys approach]
\label{def:alphamonkeys}

Consider an interval $[a,b] \subset \mathbb{R}$, a domain $E \subset \mathbb{R}^r$ for some $r \in \mathbb{N}$ and a right-hand term $f: E \to \mathbb{R}^r$ for an ODE of the form of the one in Equation \eqref{eq:pb} with initial condition $y_0$. Let $\{ E_\gamma \}_{\gamma< \omega_1}$ be the sequence of $f$-removed sets on $E$ and let $E_\alpha$ be one set in the sequence for some $\alpha< \omega_1$. We call the \emph{$(\alpha)$Monkeys approach for $(f, y_0)$} the following method: consider all tuples of the form $\left(X_{i, \beta, j }, h_{i, \beta, j }, B_{i, \beta, j }, C_{i, \beta, j }, Y_{i, \beta, j }\right)$ for $i=0, \ldots, l-1$, $\beta < \alpha$, $j=1, \ldots, m_{i,\beta}$, where $h_{i, \beta, j } \in \mathbb{Q}^+$, $l, m_i \in \mathbb{N}$ and $X_{i, \beta, j }$, $B_{i, \beta, j }$, $C_{i, \beta, j }$ and $Y_{i, \beta, j }$ are open rational boxes in $E$. A tuple is said to be valid if $y_0 \in \bigcup_{\beta, j} X_{0, \beta, j }$ and for all $i=0, \ldots, l-1$, $\beta < \alpha$, $j=1, \ldots, m_{i,\beta}$ we have:

\begin{enumerate}
\item Either $\left( B_{i, \beta, j }= \emptyset \right)$ or $\left( \operatorname{cl}(B_{i, \beta, j }) \cap E_\beta \neq \emptyset \; \text{and} \; \operatorname{cl}(B_{i, \beta, j }) \cap E_{\beta+1} = \emptyset  \right)$
\item $ f \restriction_{E_\beta} \left( \operatorname{cl}(B_{i, \beta, j }) \right) \subset C_{i, \beta, j }$;
\item $ X_{i, \beta, j } \cup  Y_{i, \beta, j } \subset B_{i, \beta, j }$;
\item $ X_{i, \beta, j } + h_{i, \beta, j } C_{i, \beta, j }\subset Y_{i, \beta, j }$;
\item $ \bigcup_{\beta, j} Y_{i, \beta, j } \subset \bigcup_{\beta,j} X_{i+1,\beta, j}$;
\end{enumerate}
\end{definition}

We can now present the main theorem concerning solutions of solvable IVPs that support the claim above. A detailed description of the proof of the theorem, as well as an exhaustive explanation of the roles that Definitions \ref{def:removed}, \ref{def:solvable} and \ref{def:alphamonkeys} play in such proof, can be found in \cite{bournez2024solvable}.

\begin{theorem}[{\cite{bournez2024solvable}}]
\label{thm:main}
Consider a closed interval, a compact domain $E \subset \mathbb{R}^r$ for some $r \in \mathbb{N}$ and a function $f: E \to E$ such that, given an initial condition, the IVP of the form of Equation \eqref{eq:pb} with right-hand term $f$ has a unique solution on the interval. 

If $f$ is a solvable function, then we can obtain the solution analytically via transfinite recursion up to an ordinal $\alpha$ such that $\alpha < \omega_1$. 
\end{theorem}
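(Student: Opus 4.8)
The plan is to generalize the \emph{ten thousand monkeys} enclosure algorithm of \cite{collins2009effective}, which analytically solves continuous IVPs with unique solutions, to the discontinuous solvable setting by organizing the search around the ordinal stratification of the discontinuities encoded in the sequence of $f$-removed sets. The $(\alpha)$Monkeys approach of Definition \ref{def:alphamonkeys} is precisely the device that carries out this stratified search, and the whole argument reduces to showing that its valid tuples enclose the unique solution arbitrarily tightly, with the recursion running up to the ordinal supplied by Theorem \ref{thm:lastord}.

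First I would fix the ordinal: since $f$ is solvable, Theorem \ref{thm:lastord} gives a least $\alpha < \omega_1$ with $E_\alpha = \emptyset$, and this $\alpha$ is the height of the transfinite recursion. Next I would establish by transfinite induction that every $E_\beta$ is closed: $E_0 = E$ is compact, the successor step $E_{\beta+1} = D_{f_\beta}$ is closed because $f$ is solvable (the discontinuity set of the restriction to a closed set is closed, Definition \ref{def:solvable}), and the limit step is an intersection of closed sets. As an immediate consequence, on each difference $E_\beta \setminus E_{\beta+1}$ the restriction $f\restriction_{E_\beta}$ is continuous in the relative topology, so within any box $B$ satisfying condition (1) of Definition \ref{def:alphamonkeys} (which forces $\operatorname{cl}(B)\cap E_\beta \neq \emptyset$ and $\operatorname{cl}(B)\cap E_{\beta+1} = \emptyset$) the derivative bound $f\restriction_{E_\beta}(\operatorname{cl}(B)) \subset C$ of condition (2) is a genuine continuous-image enclosure. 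This is where solvability does the work: it guarantees that each stratum behaves like the classical continuous case and that the closed sets $E_\beta$ can be separated cleanly by rational boxes.

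With the strata in hand I would run the $(\alpha)$Monkeys search and prove two complementary facts. For soundness, a valid tuple chains validated Euler steps: condition (4), $X + hC \subset Y$, advances an enclosure $X$ of the trajectory by time $h$ using the derivative bound $C$, and condition (5), $\bigcup Y_{i} \subset \bigcup X_{i+1}$, passes the resulting enclosure to the next step; together with the initial inclusion $y_0 \in \bigcup X_{0,\beta,j}$ this shows every valid scheme traps a genuine trajectory of \eqref{eq:pb}, which by the uniqueness hypothesis is \emph{the} solution. For completeness, I would show that for every target accuracy there exists a valid scheme whose boxes enclose the solution within that accuracy: at each time the trajectory lies in some stratum $E_\beta \setminus E_{\beta+1}$, and continuity of $f\restriction_{E_\beta}$ there, together with compactness and the closedness of $E_{\beta+1}$, lets one select boxes meeting conditions (1)--(4), while the index $\beta$ ranging over all of $\alpha$ (and the multiplicity $j$) lets the scheme follow the trajectory as it crosses from one stratum to a lower one. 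Taking finer boxes and smaller step sizes $h$ then makes the enclosing tubes shrink to the graph of the unique solution, which is exactly what it means to obtain the solution analytically through transfinite recursion up to $\alpha$.

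The main obstacle is completeness through the discontinuity points, and in particular the handling of limit stages. Away from $\bigcup_\beta E_{\beta+1}$ the reasoning is classical, but one must control the closed set of times where the trajectory sits exactly on a discontinuity of some $f\restriction_{E_\beta}$, since there the derivative bound furnished by one stratum need not by itself control the motion into a lower stratum. The crux is to exploit that the solution is genuinely everywhere differentiable and that $\{E_\beta\}$ is strictly decreasing and closed, so that the trajectory enters $E_{\beta+1}$ only through $E_\beta$ and the stratified Euler steps can be stitched together consistently across all $\beta < \alpha$, including at limit ordinals where $E_\beta = \bigcap_{\gamma<\beta} E_\gamma$; verifying that this stitching converges uniformly over the countably many levels is the heart of the proof.
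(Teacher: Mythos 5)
Your outline follows the same route the paper intends: fix $\alpha$ via Theorem \ref{thm:lastord}, verify by transfinite induction that each $E_\beta$ is closed (this is exactly where Definition \ref{def:solvable} is used), and run the $(\alpha)$Monkeys approach of Definition \ref{def:alphamonkeys} as a stratified generalization of the enclosure algorithm of \cite{collins2009effective}, with soundness and completeness of valid tuples plus uniqueness identifying the enclosed trajectory. Note that the present paper does not reproduce the proof but defers it to \cite{bournez2024solvable}; against that intended architecture your plan is the right one.

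However, there is a genuine gap at the decisive step, and it is precisely the one you flag but do not resolve. Your soundness paragraph treats condition (4), $X + hC \subset Y$, as a validated Euler step, but condition (2) only constrains $f \restriction_{E_\beta}$ on $\operatorname{cl}(B_{i,\beta,j})$, i.e.\ the values of $f$ at points of $E_\beta$ inside the box; since $\operatorname{cl}(B_{i,\beta,j}) \cap E_{\beta+1} = \emptyset$, the box may still contain points of every stratum of rank $\gamma < \beta$, where this tuple component gives no bound on $f$ at all. Within a single step of length $h_{i,\beta,j}$ the trajectory can change stratum arbitrarily often, so the increment $y(t+h)-y(t)$, which by the mean value inequality lies in the closed convex hull of $f(y([t,t+h]))$, is not controlled by any single $C_{i,\beta,j}$. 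Closing this requires an argument you do not supply: using everywhere-differentiability of $y$ (so that $y'(t)=f(y(t))$ pins the local slope to the value at the stratum of $y(t)$, enclosed by $C$ via relative continuity of $f\restriction_{E_\beta}$ when the box is small), coordinated across the unions over $(\beta,j)$ in conditions (1) and (5), together with a compactness argument at limit levels where $E_\beta=\bigcap_{\gamma<\beta}E_\gamma$. A second, smaller slip: you claim a valid scheme ``traps a genuine trajectory,'' but validity does not produce a solution inside the tube; existence is a hypothesis of the theorem, and the correct statement is an invariance claim, namely that the given unique solution cannot exit the tube of a sufficiently fine valid tuple. As written, your completeness half is a plan whose hardest case (crossing accumulation points of discontinuities and stitching across limit ordinals) is named as ``the heart of the proof'' rather than proved, so the proposal should be counted as an accurate reconstruction of the strategy, not a complete argument.
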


\subsection{Examples}
\label{sec:example}

We present here a few of the examples described in \cite{StacsBournezGozzi2024} to give the reader a clearer picture of solvable systems. 

\begin{example}[A discontinuous IVP with a unique solution]
\label{ex:firstex}
As the simplest case of discontinuous IVP, we consider the following example: let $E= [-5,5] \times [-15,15]$ and define the function $f: E \to E$ as $f(x,z)=(1,2x \sin{\frac{1}{x}} - \cos{\frac{1}{x}})$ if $x\neq 0$ and $f(0,z)=(1,0)$. It is easy to see that $f$ is a function of class Baire one, i.e.\ it is the pointwise limit of a sequence of continuous functions. Note also that in this case, the set of discontinuity points of the function $f$ on $E$ is the closed set $E_1= \{ (0,z)$ for $z \in [-15,15] \}$.  Then consider the following IVP, with $y: [-2,2] \to \mathbb{R}^2$ and $y_0=(-3, 9 \sin(-\frac{1}{3}))$:

\begin{equation}
\label{eq:derivODE2dimens}
\begin{cases}
y'(t)= f(y(t))\\
y(-2)= y_0
\end{cases}
\end{equation}

It is easy to verify that the solution of such a system is unique, and it is for the first component: $y_1(t)= t- 1$, and the second component $y_2 (t)= (t-1)^2 \sin(\frac{1}{t-1})$ for $t \neq 1$ and $y_2(1)=0$. Therefore the solution $y: [-2,2] \to \mathbb{R}^2$ is differentiable and can be expressed as the unique solution of the IVP above with right-hand side $f$ discontinuous on $E$. Note that the only discontinuity of $f$ encountered by the solution is the point $(0,0)$, i.e.\ $E_1 \cap y([-2,2])= (0,0)$. Moreover, it is easy to see that the right-hand term $f$ is a solvable function. Indeed, $f$ is trivially a function of class Baire one, while $f \restriction_{E_1}$ is a constant function identically equal to $(1,0)$ and therefore continuous everywhere in its domain. This also implies $E_2=\emptyset$, where $E_2$ is the second set in the sequence of $f$-removed sets on $E$. 
\end{example}

The construction of this simple example (and of the solution of this classical exercise) is based on the well-known fact that the real function $f(x)=x^2 \sin (\frac{1}{x})$ if $x\neq 0$ and $f(0)=0$ is differentiable over $[0,1]$ and its derivative is bounded and discontinuous in $0$. Moreover, we avoided some problems that arise for one-dimensional ODEs with null derivative by introducing a \emph{time} variable $y_1$ whose role is to prevent the system from stalling and ensure the unicity of the solution. 

The concept behind such an example can be easily generalized by making use of differentiable functions with more and more complicated derivatives. It is indeed possible to construct in this way examples with right-hand terms discontinuous over sets that are homeomorphic to the Cantor set, as well as examples with $E_2 \neq \emptyset$. 

Another interesting example showcases the relation between ordinal Turing computations and solvable IVPs by designing one such IVP whose solution encodes, at an integer time, the halting set for Turing machines. Hence the following example represents the first milestone for a possible equivalent formulation of hyperarithmetical reals using solvable IVPs, in the same sense as what has been done in \cite{dougherty1991complexity} for the Denjoy integral. 

\begin{example}[A solvable IVP encoding the halting set] \label{ex:secondex}

Given a G\"{o}del enumeration of Turing machines, we define the halting problem as the problem of deciding the halting set $H=\{ e : M_e (e) \perp \}$ where $ M_e (e) \perp$ means that machine represented by natural $e$ halts on input $e$. We consider a one-to-one total computable function over the naturals $h:\mathbb{N} \to \mathbb{N}$ that enumerates such a set. It is known that any such function enumerating a noncomputable set naturally generates a noncomputable real number \cite{PORI17}. The following definition expresses this more formally:

\begin{definition}
\label{def:noncompu}
Let $h:\mathbb{N}_0 \to \mathbb{N}$ be a one-to-one computable function such that $h(i)>0$ for all $i \in \mathbb{N}_0$ and such that it enumerates a noncomputable set $A$. Then the real number $\mu$ defined as $\mu= \sum_{i=0}^{\infty} 2^{- h(i)}$ is noncomputable.
\end{definition}

According to this notion of encoding, the existence of an example with the properties mentioned above is expressed by the following theorem, whose complete proof can be found in \cite{bournez2024solvable}:

\begin{theorem}[{\cite{bournez2024solvable}}]
\label{thm:HaltingP}
Let $E=[-5,5] \times [-5,5]$ and let $\mu$ be a real encoding of the halting set for Turing machines. There exists a solvable IVP with unique solution $y: [0,5] \to E$, rational initial condition, and right-hand term computable everywhere on $E$ except a straight-line, that satisfies $y_2 (5)=\mu$. 
\end{theorem}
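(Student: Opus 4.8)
The plan is to realise the infinite sum $\mu=\sum_{i=0}^{\infty} 2^{-h(i)}$ as the limit of an accumulator component $y_2$ of a solution that performs infinitely many computation steps inside the finite interval $[0,5]$, compressing step $i$ into a shrinking time window so that all steps accumulate at a single time $t^{*}\le 5$. Fix a one-to-one total computable enumeration $h\colon\mathbb{N}_0\to\mathbb{N}$ of the halting set with $h(i)>0$, so that by Definition \ref{def:noncompu} the target $\mu$ is noncomputable. I would choose a Zeno-type schedule $t_0=0<t_1<\cdots$ with $t_i\uparrow t^{*}$ for some rational $t^{*}\in(0,5]$, say with $t_{i+1}-t_i$ geometrically decreasing, and devote the window $[t_i,t_{i+1}]$ to (a) simulating the enumeration algorithm long enough to produce the value $h(i)$ and (b) incrementing the accumulator by exactly $2^{-h(i)}$.

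First I would recall from the analog-computation literature \cite{TAMC06} that one step of a Turing machine, and more generally the evaluation of any fixed computable function on a bounded domain, can be carried out by an ODE with a continuous (indeed analytic) right-hand term on a compact set, together with a clock variable. Using this as a black box inside each window, together with a time variable of the kind introduced in Example \ref{ex:firstex} to forbid stalling and guarantee uniqueness, I would assemble a right-hand term $f\colon E\to E$ on $E=[-5,5]\times[-5,5]$ whose solution runs these windows in sequence. The increment in step (b) is produced by driving $y_2$ through a smooth bump of total mass $2^{-h(i)}$; here the profile functions should be taken with vanishing derivatives at the endpoints of the window (the $f[a,b]$ machinery of the Notation section, with $f'(0)=f'(1)=0$), so that the concatenation over all windows is everywhere differentiable, including at the junctions $t_i$.

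The crucial point is the behaviour at the accumulation time $t^{*}$. By the geometric control of the window lengths and of the increments, I would show that $y_2(t)\to\mu$ and, more delicately, that $y_2'(t)\to 0$ as $t\to t^{*-}$, so that setting the dynamics to be constant (idle) for $t\ge t^{*}$ yields a solution that is differentiable everywhere on $[0,5]$ and satisfies $y_2(5)=\mu$. The right-hand term $f$ is then discontinuous only along the straight line in $E$ corresponding to the accumulation configuration (the analogue of the line $E_1$ in Example \ref{ex:firstex}); everywhere off this line $f$ agrees with a fixed computable field, so $f$ is computable away from a straight line and is of Baire class one. Restricting $f$ to that line gives a constant, hence continuous, field, so the discontinuity set of every closed restriction is closed and $E_2=\emptyset$; thus $f$ is solvable by Definition \ref{def:solvable}. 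The initial condition can be taken rational, since $t^{*}$, the schedule, and the starting tape configuration are all rational data.

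The main obstacle I anticipate is the simultaneous reconciliation of three requirements at $t^{*}$: everywhere differentiability of $y$, solvability of $f$ (in particular the Baire-one property together with continuity of the restriction to the discontinuity line), and exact convergence of $y_2$ to $\mu$. The noncomputability of $\mu$ must be introduced purely through the infinite accumulation, not through any individual window, since each $h(i)$ is computable and only the transfinite limit escapes computability; care is therefore needed to keep the off-line field computable while still forcing the limit value. Matching the derivatives to zero at every junction and at $t^{*}$, and bounding the simulation cost of each window so that it genuinely fits inside its shrinking time slot, are the quantitative estimates on which the construction rests.
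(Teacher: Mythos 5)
Your overall architecture — a Zeno schedule of shrinking windows accumulating increments $2^{-h(i)}$, a clock variable guaranteeing uniqueness as in Example \ref{ex:firstex}, derivative tending to $0$ at the accumulation time, and a discontinuity confined to a straight line on which the restricted field is constant so that $E_2=\emptyset$ — is indeed the intended shape of the construction (the present paper only states the theorem and defers the complete proof to \cite{bournez2024solvable}). But two steps of your sketch fail as written. First, the dimension: the theorem fixes $E=[-5,5]\times[-5,5]$, so the state is $(y_1,y_2)$ with $y_1$ the clock and $y_2$ the accumulator, and there is no room to run the Turing-machine-simulating ODEs of \cite{TAMC06} ``as a black box inside each window'' — those simulations need several variables to store the machine configuration. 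In this two-dimensional setting the enumeration has to be compiled directly into the right-hand term as a time profile, $f(x_1,x_2)=(1,g(x_1))$ with $g$ computable away from a single vertical line $\{x_1=c\}$; this is also the only reading compatible with the stated discontinuity set being a straight line.

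Second, and more seriously, your quantitative claim that ``geometric control of the window lengths and of the increments'' yields $y_2'(t)\to 0$ is false for a fixed geometric schedule: the slope in window $i$ is of order $2^{-h(i)}/(t_{i+1}-t_i)\approx 2^{\,i-h(i)}$, and $i-h(i)$ is unbounded for \emph{every} computable one-to-one enumeration $h$ of the halting set — if $h(i)\geq i-c$ held for all $i$, then $v\in H$ iff $v\in\{h(0),\dots,h(v+c)\}$, making $H$ decidable. Since $f$ maps into $[-5,5]^2$, the derivative is capped at $5$, so your schedule contradicts the statement outright. The repair is to adapt window lengths to the discovered values, e.g.\ length $\sim 2^{-h(i)/2}$: the total extra time $\sum_i 2^{-h(i)/2}\leq\sum_{v\geq 1}2^{-v/2}$ is finite by injectivity, slopes $\sim 2^{-h(i)/2}\to 0$, and the tail mass is bounded by the square of the remaining time (since $\sum_{i\geq I}2^{-h(i)}\leq\bigl(\sum_{i\geq I}2^{-h(i)/2}\bigr)^2$), giving differentiability with $y_2'(t^*)=0$. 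But this makes the accumulation time $t^*$ a noncomputable real, so your closing assertion that ``$t^*$, the schedule, and the starting tape configuration are all rational data'' fails, and computability of $f$ off the line becomes delicate on the far side of it, where no algorithm can certify that a given point lies past $t^*$; one must position each mass by the decidable pair (index, exact halting time) rather than by enumeration order. Until the schedule and this off-line computability argument are supplied, the reconciliation you yourself flag as the main obstacle remains unproved in your write-up.
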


\end{example}
\section{Set descriptive complexity of solvable functions}
\label{sec:main}

The goal of this section is to investigate the properties of the class of solvable functions by introducing a natural ranking based on the complexity of their discontinuities. For simplicity of presentation, we restrict the analysis to the solutions of solvable systems with domain $[0,1]$, but the treatment can be easily extended to any real closed interval. 

For formalizing the definition of the ranking, we are immediately faced with important choices to make concerning the base space to consider. Indeed, since in this context we are dealing with dynamical systems, it is not immediately clear which one should be the base space in which to inscribe the ranking. This is because the right-hand term, the initial condition, and the solution all live in different spaces, i.e.\ the first one is a discontinuous function, the second one a vector, and the third one a differentiable function. One possible interpretation could be to choose as the base space the product of all such involved spaces, in order to have a full complete description of the system. Nonetheless, we believe that such a description is redundant, and instead we prefer to focus only on the solutions to also preserve the similarities with differentiability rankings that lead to interesting comparisons. Our choice has been an operational one, we decided to select the search method we proposed to find the solution (i.e.\ the $(\alpha)$Monkeys approach as in Definition \ref{def:alphamonkeys}) as the inner core criterium for describing the ranking of such solutions. More precisely, the relation between the solution and the sequence of removed sets is the key element that provides complexity classification. This is expressed by the following definition: 

\begin{definition}[Solvable ranking]
\label{def:solvablerank}
Consider a compact domain $E \subset \mathbb{R}^r$ for some $r \in \mathbb{N}$ and a differentiable function $y:[0,1] \to E$ with solvable derivative $y'$. Let $\{ E_\gamma \}_{\gamma< \omega_1}$ to be the sequence of $y'$-removed sets on $[0,1]$. We define the \emph{solvable ranking of $y$}, that we indicate with $\vert y \vert_{SV}$, as $\vert y \vert_{SV}= \alpha$, where $\alpha$ is the least ordinal such that $E_\alpha = \emptyset$.  
\end{definition}

Several observations and discussions are due at this point to better understand this ranking. The first observation is that the above ranking is well-defined. Indeed, if $y'$ is a solvable function, it follows from Theorem \ref{thm:lastord} that there exists an ordinal $\alpha < \omega_1$ in the sequence of $y'$-removed sets on $[0,1]$ such that $E_\alpha = \emptyset$. For all $\gamma < \omega_1$, we define $SV_{\gamma}= \bigl\{ y \in \mathcal{D}: \left\vert y \right\vert_{SV} \leq \gamma \bigr\}$.

The idea and justification behind the above definition can be described briefly in an intuitive manner. The above definition refers only to differentiable functions with solvable derivatives and this is because for these functions we know that we can design a dynamical system based on an IVP for which $y$ will be the solution and for which such solution can be obtained with our method as in Theorem \ref{thm:main}. Therefore in this case we can rank the complexity depending on the number of steps required by such a method. Note that this description is independent of the actual formalization of the dynamical system considered, the right-hand term involved, or the initial condition chosen within the domain. The reason why we chose to provide this independence to our ranking is better explained by the following example: 

\begin{example}[A simple solution within a discontinuous dynamical system]

The idea of this trivial example is to imagine a situation in which the solution of the dynamical system is continuously differentiable, i.e.\ $y \in C^1([0,1])$, but lives in a domain $E$ where the right-hand term $f$ is discontinuous on any ball intersecting $y([0,1])$. This would imply that any application of our analytical method based on Definition \ref{def:alphamonkeys} would require at least one step to obtain the solution, despite it being continuously differentiable. Nonetheless, it is clear that by considering a restriction of the domain, or by carefully redefining the right-hand term outside the right trajectory of the evolution, the solution is easily retrievable from the initial condition. Therefore, this is an argument supporting the idea that the ranking should be independent from the particular domain or right-hand term, but should instead focus on the solution itself, or in other words, should focus only on the \emph{simplest possible} dynamical system that could present such solution.

 Let $E= [0,1] \times [0,1]$, consider an IVP on the interval $[0,1]$ of the form of Equation \eqref{eq:pb} where $y(0)=(1/2,0)$ and the right hand term $f:E \to E$ is defined in the following way: 

\begin{equation}
\begin{cases}
f(x,z)= (0,1) \text{  if  } x=1/2  \\
f(x,z)= (x,1) \text{ otherwise }\\
\end{cases}
\end{equation}

It is immediate that there exists a unique solution, i.e.\ $y(t)=(1/2,t)$ for all $t \in [0,1]$, and that such solution is continuously differentiable. Nonetheless, the right-hand term $f$, which is bounded and solvable, is discontinuous on the line $\{ (1/2,z) : z \in [0,1]\}$. This implies that the method that obtains the solution relies on the application of the $(1)$Monkeys approach. Therefore it follows that if we were to assign the solvable ranking to $y$ and $f$ only depending on the complexity of the analytical method we described, such a solution within the context of this dynamical system would have a ranking equal to $1$ despite having a continuous derivative. 
\end{example}

\subsection{Comparison with differentiability rankings}

We start by introducing the Kechris-Woodin ranking \cite{kechris1986ranks} for differentiable functions over $[0,1]$ that offers an interesting comparison with the solvable ranking in the one-dimensional case. 

Given $y \in C([0,1])$, for any two points $x<z$ we indicate the difference quotient $\Delta_y (x,z)$ as:

\begin{equation*}
\Delta_y (x,z)= \frac{y(x)-y(z)}{x-z}
\end{equation*}

the difference quotient allows us to define a derivative operator which leads to the construction of the Kechris-Woodin ranking. 

\begin{definition}[Kechris-Woodin ranking]

Let $y \in C([0,1])$, $\epsilon>0$ and $P \subset [0,1]$ closed. Define the set:

\begin{equation*}
\begin{split}
P'_{\epsilon, y}= \{ x \in P : \forall \delta >0 \exists p < q , r < s \in B_{[0,1]}(x, \delta) \cap \mathbb{Q} \; \text{with} \\
 [p,q] \cap [r,s] \cap P \neq \emptyset \; \text{and} \; \vert \Delta_y (p,q) - \Delta_y (r,s) \vert \geq \epsilon \}
\end{split}
\end{equation*} 

Since this set is closed, we construct the following inductive hierarchy: 

\begin{itemize}
\item  $P^{0}_{\epsilon,y}=[0,1]$ 
\item  $P^{\alpha+1}_{\epsilon,y}= (P^{\alpha}_{\epsilon,y})'_{\epsilon,y} $
\item  For $\alpha$ limit ordinal, $P^{\alpha}_{\epsilon,y}= \displaystyle\cap_{\beta< \alpha} (P^{\beta}_{\epsilon,y})'_{\epsilon,y}$
\end{itemize}

We call the Kechris-Woodin ranking of $y$, that we indicate with $\vert y \vert_{KW}$, the least ordinal $\alpha$ such that $(\forall \epsilon >0) P^{\alpha}_{\epsilon,y} = \emptyset$. 

\end{definition}

It is known that the set of differentiable functions from $[0,1]$ to $[0,1]$, that we call $\mathcal{D}$, is a $\boldsymbol{\Pi}_1^{1}$ subset of $C([0,1])$. Moreover, it is also $\boldsymbol{\Pi}_1^{1}$ complete \cite{Maz36}. Consequently $\mathcal{D}$ is not a Borel subset of $C([0,1])$. By looking at the above ranking as a function from $\mathcal{D}$ to the class of ordinals, the authors proved in \cite{kechris1986ranks} that such ranking is a $\boldsymbol{\Pi}_1^{1}$ norm. If for all $\alpha< \omega_1$ we define $KW_{\alpha}= \bigl\{ y \in \mathcal{D}: \left\vert y \right\vert_{KW} \leq \alpha \bigr\}$ then the latter implies due to Proposition \ref{prop:coananorm} that $\bigcup_{\alpha< \omega_1} KW_{\alpha} = \mathcal{D}$ and that $\sup \bigl\{ \left\vert y \right\vert_{KW} : y \in \mathcal{D} \bigr\}= \omega_1$. This is of course not the only possible $\boldsymbol{\Pi}_1^{1}$ norm on $\mathcal{D}$, and interesting comparisons between different $\boldsymbol{\Pi}_1^{1}$ norms can be found in \cite{Ra91}. 

Furthermore, it is immediately clear from the above definition that the inductive hierarchy of closed sets used for defining this ranking has many analogies in structure with the sequence of removed sets presented in Definition \ref{def:removed}. Indeed, we report here some properties of the Kechris-Woodin ranking proved in the original article that have clear immediate assonance within the context of some of the examples of solvable functions previously discussed in Section \ref{sec:example}. 

\begin{proposition}
Let $y:[0,1] \to [0,1]$ be a differentiable function. Then $\vert y \vert_{KW}= 1$ if and only if  
$y$ is continuously differentiable.
\end{proposition}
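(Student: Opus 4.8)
The plan is to first reduce the statement to a single level of the hierarchy. Since $P^{0}_{\epsilon,y}=[0,1]$ is nonempty for every $\epsilon$, we always have $\vert y \vert_{KW}\geq 1$; hence $\vert y \vert_{KW}=1$ is equivalent to the assertion that $([0,1])'_{\epsilon,y}=\emptyset$ for every $\epsilon>0$. The running tool in both directions is that, by the mean value theorem, every difference quotient $\Delta_y(p,q)$ equals $y'(\xi)$ for some $\xi\in(p,q)$, so controlling $\Delta_y$ over small intervals amounts to controlling the values of $y'$ there.

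For the implication that continuous differentiability forces rank $1$, suppose $y$ is continuously differentiable. Then $y'$ is uniformly continuous on $[0,1]$, so given $\epsilon>0$ there is $\eta>0$ with $\vert y'(u)-y'(v)\vert<\epsilon$ whenever $\vert u-v\vert<\eta$. Taking $\delta=\eta/2$, any rationals $p<q$ and $r<s$ inside $B_{[0,1]}(x,\delta)$ give $\Delta_y(p,q)=y'(\xi)$ and $\Delta_y(r,s)=y'(\zeta)$ with $\xi,\zeta\in B_{[0,1]}(x,\delta)$, whence $\vert\Delta_y(p,q)-\Delta_y(r,s)\vert<\epsilon$. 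Thus no $x$ belongs to $([0,1])'_{\epsilon,y}$, and this level is empty for every $\epsilon$; note that the overlap hypothesis is not even needed here.

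The substantive direction is the converse, which I would prove contrapositively: if $y'$ is discontinuous at some $x_0$, I will exhibit $\epsilon>0$ with $x_0\in([0,1])'_{\epsilon,y}$, forcing $\vert y \vert_{KW}\geq 2$. The key structural input is that a derivative has no removable or jump discontinuities (if a one-sided limit of $y'$ at $x_0$ existed it would equal $y'(x_0)$, again by the mean value theorem), so discontinuity at $x_0$ means that on at least one side either $\limsup_{x\to x_0}y'(x)>y'(x_0)$ or $\liminf_{x\to x_0}y'(x)<y'(x_0)$. Treating the first case (the others being symmetric in side and sign), I fix $c>0$ with the $\limsup$ exceeding $y'(x_0)+3c$ and pick $a_n\to x_0$ from that side with $y'(a_n)>y'(x_0)+2c$.

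For each such $a_n$, since the one-endpoint quotient $\Delta_y(a_n,a_n+\eta)\to y'(a_n)$ as $\eta\to 0$, I choose $\eta_n$ small so that the short interval $[a_n,a_n+\eta_n]$ has mean slope above $y'(x_0)+3c/2$; simultaneously, since $\Delta_y(x_0,t)\to y'(x_0)$ as $t\to x_0$, the longer interval $[x_0,a_n+2\eta_n]$ has mean slope below $y'(x_0)+c/2$ once $n$ is large. These two intervals overlap, the short one sitting strictly inside the long one, and their difference quotients differ by at least $c$. Taking $n$ large places both inside any prescribed $B_{[0,1]}(x_0,\delta)$, and replacing the possibly irrational endpoints by nearby rationals, using continuity of $\Delta_y$ in its endpoints together with the strict nesting to preserve both the overlap and the gap, verifies the defining condition of $([0,1])'_{c/2,y}$ at $x_0$. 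The main obstacle is exactly this overlap requirement built into $P'_{\epsilon,y}$: oscillation of $y'$ produces distinct \emph{values} of $y'$ at different nearby points, but converting this into overlapping intervals with distinct \emph{average} slopes is what the nesting construction is designed to do, and it is the step where one must exploit that difference quotients are genuine means of $y'$ (equivalently, limits of one-endpoint quotients) rather than arbitrary point samples.
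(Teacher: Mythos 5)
Your proof is correct, and there is no internal proof to compare it against: the paper states this proposition without argument, importing it as one of the "properties of the Kechris-Woodin ranking proved in the original article" \cite{kechris1986ranks}. Your reconstruction is essentially the standard one: the forward direction via uniform continuity of $y'$ and the mean value theorem (your observation that the overlap clause is not needed there is accurate), and the converse via the fact that a derivative admits no removable or jump discontinuities, followed by the nested short/long interval construction near a discontinuity point of $y'$ — which is precisely the content of Fact 3.3 of \cite{kechris1986ranks} that the paper itself invokes in the base case of the transfinite induction in its proof of Theorem \ref{thm:diffranking}. You also treat correctly the two points where a blind attempt could stumble: the overlap requirement in the definition of $P'_{\epsilon,y}$ is secured by the strict nesting of $[a_n,a_n+\eta_n]$ inside $[x_0,a_n+2\eta_n]$ with positive margin $\min(a_n-x_0,\eta_n)$, and the restriction to rational endpoints is absorbed by continuity of $(p,q)\mapsto\Delta_y(p,q)$ off the diagonal, so the gap $\geq c$ degrades only to $\geq c/2$ after perturbation.
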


\begin{proposition}
Let $y:[0,1] \to [0,1]$ be the function defined as $y(0)=0$ and $y(x)= x^2\sin(1/x)$ otherwise. Then $\vert y \vert_{KW}= 2$.
\end{proposition}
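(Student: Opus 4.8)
The plan is to establish the two bounds $\vert y \vert_{KW} \geq 2$ and $\vert y \vert_{KW} \leq 2$ separately, and for the lower bound I would simply invoke the preceding proposition. The function $y$ is differentiable on all of $[0,1]$: at $0$ one has $y'(0)=\lim_{x\to 0} x\sin(1/x)=0$, while away from $0$ the chain rule gives $y'(x)=2x\sin(1/x)-\cos(1/x)$. This derivative has no limit as $x\to 0^+$ because of the $\cos(1/x)$ term, so $y$ is not continuously differentiable. By the preceding proposition $\vert y \vert_{KW}=1$ would force $y\in C^1([0,1])$, hence $\vert y \vert_{KW}\neq 1$; since $P^0_{\epsilon,y}=[0,1]\neq\emptyset$ already forces $\vert y\vert_{KW}\geq 1$, we conclude $\vert y\vert_{KW}\geq 2$.

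For the upper bound the goal is to show $P^2_{\epsilon,y}=\emptyset$ for every $\epsilon>0$, which I would reach in two steps, using that the operator $P\mapsto P'_{\epsilon,y}$ is monotone in $P$. First I claim $P^1_{\epsilon,y}\subseteq\{0\}$. Fix $x\in(0,1]$; on a closed neighborhood $[x-\rho,x+\rho]\cap[0,1]$ with $0<\rho<x$ the function $y$ is $C^1$, so $y'$ is uniformly continuous there, and the mean value theorem writes every difference quotient $\Delta_y(p,q)$ over a subinterval of $B_{[0,1]}(x,\delta)$ as a value of $y'$ at an interior point. Choosing $\delta<\rho$ small enough that $y'$ varies by less than $\epsilon/2$ over $B_{[0,1]}(x,\delta)$ makes any two such quotients differ by less than $\epsilon$, so $x\notin P^1_{\epsilon,y}$.

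Second, I would compute $(\{0\})'_{\epsilon,y}$ directly. The membership condition requires intervals $[p,q],[r,s]$ with $0\in[p,q]\cap[r,s]$, and since $p,q,r,s\geq 0$ this forces $p=r=0$. The relevant quotients are then $\Delta_y(0,q)=q\sin(1/q)$ and $\Delta_y(0,s)=s\sin(1/s)$, both bounded in absolute value by $\delta$ when $q,s\in(0,\delta)$; taking $\delta<\epsilon/2$ makes their difference smaller than $\epsilon$, so no valid pair exists for this $\delta$ and hence $0\notin(\{0\})'_{\epsilon,y}$, i.e.\ $(\{0\})'_{\epsilon,y}=\emptyset$. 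By monotonicity $P^2_{\epsilon,y}=(P^1_{\epsilon,y})'_{\epsilon,y}\subseteq(\{0\})'_{\epsilon,y}=\emptyset$, which gives $\vert y\vert_{KW}\leq 2$ and closes the argument.

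I expect the first step of the upper bound---showing that every $x\neq 0$ is discarded after a single application of the derived-set operator---to be the point requiring the most care, since it is where the local $C^1$ regularity of $y$ away from the origin must be converted, via the mean value theorem together with the uniform continuity of $y'$ on intervals bounded away from $0$, into a uniform bound on the oscillation of \emph{all} difference quotients over a small rational neighborhood. The complementary fact that $\vert y\vert_{KW}$ is genuinely $2$ rather than $1$ is handled cleanly by the regularity dichotomy of the preceding proposition, so no separate verification that $0\in P^1_{\epsilon,y}$ for some $\epsilon$ is strictly needed.
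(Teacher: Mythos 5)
Your proof is correct. Note that the paper itself contains no proof of this proposition: it is quoted as one of the properties ``proved in the original article'' \cite{kechris1986ranks}, so there is no in-paper argument to compare against; your write-up is in effect a self-contained verification of the cited fact, and it follows the standard route. The lower bound is handled cleanly by the preceding characterization $\vert y\vert_{KW}=1 \Leftrightarrow y\in C^1([0,1])$, using that $y'(x)=2x\sin(1/x)-\cos(1/x)$ has no limit as $x\to 0^+$ while $y'(0)=0$. The upper bound is also sound: for $x\in(0,1]$, local $C^1$-regularity plus the mean value theorem converts every difference quotient over $B_{[0,1]}(x,\delta)$ into a value of $y'$ there, so uniform continuity of $y'$ on a compact neighborhood avoiding $0$ kills $x$ after one application of the derived-set operator, giving $P^1_{\epsilon,y}\subseteq\{0\}$; at $0$, the intersection requirement $[p,q]\cap[r,s]\cap\{0\}\neq\emptyset$ together with $p,q,r,s\geq 0$ correctly forces $p=r=0$, and the bound $\vert\Delta_y(0,q)\vert=\vert q\sin(1/q)\vert\leq q<\delta$ empties the set at the second step. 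The monotonicity of $P\mapsto P'_{\epsilon,y}$ that you invoke is immediate from the definition, since both the membership condition $x\in P$ and the condition $[p,q]\cap[r,s]\cap P\neq\emptyset$ are monotone in $P$. One cosmetic remark: the paper's definition restricts the endpoints to $B_{[0,1]}(x,\delta)\cap\mathbb{Q}$; your step at $0$ respects this since $0\in\mathbb{Q}$, and your step away from $0$ does not depend on it, as the oscillation bound holds for arbitrary real endpoints (consistent with the star-operator equivalence the paper recalls in the proof of Theorem \ref{thm:diffranking}).
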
 

\begin{proposition}
For each ordinal $\alpha < \omega_1$ there exists a differentiable function $y:[0,1] \to [0,1]$ with bounded derivative such that $\vert y \vert_{KW}= \alpha$.
\end{proposition}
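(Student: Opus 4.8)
The plan is to realize every countable ordinal by transfinite induction, using the scaling operation $f[a,b]$ as the main engine. A direct computation from the definitions shows that for $x,z\in[a,b]$ one has $\Delta_{f[a,b]}(x,z)=\Delta_f(u,v)$ with $u=\frac{x-a}{b-a}$, $v=\frac{z-a}{b-a}$; that is, $f[a,b]$ preserves difference quotients \emph{exactly}, and $(f[a,b])'(x)=f'(\frac{x-a}{b-a})$, so it also preserves the supremum norm of the derivative. Hence pasting a scaled copy $f[a,b]$ into a subinterval reproduces, rescaled, the entire hierarchy $P^\gamma_{\epsilon,f}$ inside $(a,b)$ and keeps the derivative bounded. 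I would therefore work with normalized building blocks: differentiable $g$ with $g(0)=g(1)=0$, $g'(0)=g'(1)=0$ and bounded derivative, which can be freely rescaled, pasted into disjoint subintervals, and affinely adjusted so that the final range sits inside $[0,1]$.

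The conceptual observation that makes \emph{both} successor and limit ordinals attainable is that $|y|_{KW}$ is the supremum over $\epsilon>0$ of the ordinals $\rho_\epsilon(y)$, where $\rho_\epsilon(y)$ denotes the least $\gamma$ with $P^\gamma_{\epsilon,y}=\emptyset$. For each fixed $\epsilon$ the operator $P\mapsto P'_{\epsilon,y}$ behaves like a Cantor--Bendixson derivative on closed subsets of $[0,1]$, so each $\rho_\epsilon(y)$ is a successor ordinal; but $|y|_{KW}=\sup_{\epsilon>0}\rho_\epsilon(y)$ can be a limit ordinal once $\rho_\epsilon(y)$ grows without bound as $\epsilon\to 0$. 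Successor values will be produced at a single fixed threshold $\epsilon_0$, while limit values will be produced by damping the amplitudes of successive blocks so that finer thresholds reveal deeper structure.

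Concretely I would prove two steps. First, a single-threshold step: given normalized blocks whose rank $\gamma_k$ is realized at a common threshold $\epsilon_0$, paste sign-alternating copies $(-1)^n g_{k}[a_n,b_n]$ into disjoint intervals accumulating at a point $x^\ast$, with $a_n,b_n\to x^\ast$ and amplitudes chosen so the result $y$ is differentiable at $x^\ast$ with $y'(x^\ast)=0$. Inside each copy the hierarchy is an exact rescaling of that of the block, so copies contribute points of $P^{\gamma}_{\epsilon_0,y}$ at every level $\gamma<\sup_k\gamma_k$; these accumulate at $x^\ast$, forcing $x^\ast\in P^{\sup_k\gamma_k}_{\epsilon_0,y}$ and $\rho_{\epsilon_0}(y)\geq(\sup_k\gamma_k)+1$. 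The sign alternation guarantees a fixed oscillation $\geq\epsilon_0$ of the difference quotients across the intervals straddling $x^\ast$, keeping $x^\ast$ alive through that level; meanwhile differentiability at $x^\ast$ forces $\Delta_y(p,q)\to y'(x^\ast)$ for intervals shrinking to $x^\ast$, so once $x^\ast$ is the unique surviving point nearby it cannot enter the next derived set, giving $\rho_{\epsilon_0}(y)=(\sup_k\gamma_k)+1$. Taking the $\gamma_k$ constant (or increasing to a limit) this realizes every successor ordinal at a single threshold, with the base case $\rho=1$ given by any continuously differentiable block. Second, a damping step for a limit $\alpha=\sup_k\beta_k$: paste single-threshold blocks of rank $\beta_k$ carrying an extra amplitude factor $\mu_k\to 0$, which multiplies all their difference quotients by $\mu_k$, so the $k$-th block's rank-$\beta_k$ structure is only visible for $\epsilon\lesssim\mu_k$; then every fixed $\epsilon$ sees only finitely many blocks and $\rho_\epsilon(y)$ stays below $\alpha$, while $\sup_{\epsilon}\rho_\epsilon(y)=\alpha$.

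The main obstacle is the exact rank bookkeeping rather than the construction itself. Verifying differentiability with a bounded derivative at each accumulation point is routine once the interval lengths and amplitudes decay quickly; the delicate part is controlling the derived sets $P^\gamma_{\epsilon,y}$ precisely. In particular the interval-meeting clause $[p,q]\cap[r,s]\cap P\neq\emptyset$ in the definition of $P'_{\epsilon,y}$ must be tracked carefully to show that the accumulation point survives \emph{exactly} the intended number of derivatives, that the pasting endpoints $a_n,b_n$ do not inflate the rank, and that no spurious points enter the hierarchy. Getting the supremum-over-$\epsilon$ estimate right in the damping step, so that $\rho_\epsilon(y)$ is bounded below $\alpha$ for every fixed $\epsilon$ while the supremum is exactly $\alpha$, is the subtlest point.
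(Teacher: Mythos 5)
Your proposal cannot be checked against a proof in the paper, because the paper does not prove this proposition: it is quoted from \cite{kechris1986ranks} as a known property of the Kechris--Woodin rank. The closest construction the paper actually carries out is Westrick's tree-indexed family in the proof of Theorem \ref{thm:diffranking}, where $g_T = r + \sum_n g_{T_n}[a_n,b_n]$ and one quotes that trees $T(\alpha)$ of limsup rank $\alpha+1$ yield $\vert g_{T(\alpha)}\vert_{KW} = \alpha+1$; note that this, as used in the paper, only produces \emph{successor} ranks. Your route is genuinely different in organization: you induct directly on ordinals rather than encoding well-founded trees, and your key structural observation --- that $\vert y \vert_{KW} = \sup_{\epsilon>0} \rho_\epsilon(y)$ where each $\rho_\epsilon(y)$ is necessarily a successor ordinal (by compactness of the nested closed sets $P^{\gamma}_{\epsilon,y}$) --- is correct and is exactly what dictates the two-regime design: fixed-threshold pasting with sign alternation for successor values, amplitude damping so that finer thresholds reveal deeper structure for limit values. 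That limit ordinals are unattainable at any single $\epsilon$ and require the damping mechanism is the one point a naive adaptation of the paper's tree construction would miss, so this is a real merit of your approach. Your supporting computations are also sound: the identity $\Delta_{f[a,b]}(x,z)=\Delta_f\bigl(\frac{x-a}{b-a},\frac{z-a}{b-a}\bigr)$ and the preservation of $\Vert f'\Vert$ under $f[a,b]$ are exactly the engine behind Westrick's construction, and your ``straddle'' argument --- that once a point of differentiability is locally the unique survivor of $P^{\gamma}_{\epsilon,y}$, the meeting clause forces both witnessing intervals to contain it, whence their quotients converge to the derivative and the point exits at the next stage --- is valid. What you have is of course a plan rather than a complete proof: the deferred bookkeeping (that quotients straddling the pasting endpoints $a_n,b_n$ create no spurious high-rank points, that the copies' hierarchies are exact rescalings even though the derived-set operator is global, and that the $\rho_\epsilon$ bounds in the damping step hold uniformly below the target limit) is precisely where the bulk of the original argument lives; but you correctly identify these as the obstacles, and they are resolved along exactly these lines in \cite{kechris1986ranks} and \cite{westrick2014lightface}.
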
 

In particular, the last remark easily relativizes to the constructive ordinals. 

The first noticeable difference between the two rankings just defined is that the definition of the solvable ranking concerns vector-valued functions. Nonetheless, since we decided to restrict the scope of the solvable ranking only to differentiable functions, it is interesting to compare it with the Kechris-Woodin ranking in the one-dimensional case. Moreover, the solvable ranking only refers to the class of differentiable functions with bounded, solvable derivatives, which, as proved from a counterexample in \cite{bournez2024solvable}, is a proper subset of $\mathcal{D}$. In other words, there exist functions in $\mathcal{D}$ with bounded derivatives that do not have a solvable derivative. Therefore, for the sake of properly investigating the comparison between the two different rankings, we focus for the rest of the section on the set of functions belonging to a class that we indicate with the symbol $SV$, i.e.\ the class of functions in $\mathcal{D}$ with bounded, solvable derivative. Hence, for all $\alpha< \omega_1$, we define $KW^{*}_{\alpha}= \bigl\{ y \in SV : \left\vert y \right\vert_{KW} \leq \alpha \bigr\}$. 

As an immediate consequence of the above sections, it follows that some of the remarks already mentioned for the case of the Kechris-Woodin ranking still apply in the case of the solvable ranking:

\begin{proposition}
Let $y:[0,1] \to [0,1]$ be a differentiable function. Then $\vert y \vert_{SV}= 1$ if and only if 
$y$ is continuously differentiable.
\end{proposition}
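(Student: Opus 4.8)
The plan is to unwind Definition~\ref{def:solvablerank} together with Definition~\ref{def:removed}, since for this base case the solvable ranking reduces almost immediately to a statement about the discontinuity set of $y'$. First I would record the first two terms of the sequence of $y'$-removed sets on $[0,1]$: by construction $E_0 = [0,1]$ and $E_1 = E_{0+1} = D_{f_0}$, where $f_0 = y' \restriction_{E_0} = y'$. Because $[0,1]$ carries its usual topology as a subset of $\mathbb{R}$, the set $E_1$ is exactly the set of points at which $y'$ fails to be continuous.

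The crucial reduction is the observation that $E_0 = [0,1]$ is nonempty, so the least ordinal $\alpha$ with $E_\alpha = \emptyset$ is always at least $1$; hence $\vert y \vert_{SV} = 1$ holds if and only if $E_1 = \emptyset$. It then suffices to identify $E_1 = \emptyset$ with continuous differentiability, which is immediate: $E_1 = D_{y'} = \emptyset$ says precisely that $y'$ has no discontinuity points on $[0,1]$, i.e.\ that $y'$ is continuous, i.e.\ that $y$ is continuously differentiable. This handles the forward direction, where the ranking is well-defined by hypothesis and the argument above applies verbatim.

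For the backward direction I would first check the well-definedness of the ranking in the $C^1$ case, namely that $y'$ continuous implies $y'$ solvable in the sense of Definition~\ref{def:solvable}: a continuous function is of Baire class one, and for every closed $K \subseteq [0,1]$ the restriction $y' \restriction_K$ is again continuous, so its set of discontinuity points is empty and in particular closed. With the ranking thus defined, $E_1 = D_{y'} = \emptyset$, and the reduction of the previous paragraph yields $\vert y \vert_{SV} = 1$. Since every step is a direct translation of the definitions, I do not expect any genuine obstacle here; the only points demanding a little care are verifying that continuity of $y'$ entails solvability (so that the ranking is actually defined in the $C^1$ case) and noting that the relative topology on $[0,1]$ coincides with the ambient one, so that $D_{y'}$ is literally the classical discontinuity set of $y'$.
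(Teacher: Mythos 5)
Your proposal is correct and matches the paper's (implicit) argument: the paper treats this proposition as immediate from the definitions, exactly as you do, since $E_0=[0,1]\neq\emptyset$ forces $\vert y\vert_{SV}\geq 1$ and $E_1$ is precisely the discontinuity set of $y'$, so $\vert y\vert_{SV}=1$ iff $y'$ is continuous. Your extra check in the backward direction that a continuous derivative is solvable (Baire one, with closed discontinuity sets on every closed $K\subseteq[0,1]$), so that the ranking is actually defined, is a worthwhile point of care that the paper leaves tacit.
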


\begin{proposition}
Let $y:[0,1] \to [0,1]$ be the function defined as $y(0)=0$ and $y(x)= x^2\sin(1/x)$ otherwise. Then $\vert y \vert_{SV}= 2$.
\end{proposition}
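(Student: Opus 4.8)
The plan is to compute the derivative $y'$ explicitly, determine its set of discontinuities, and then run the sequence of $y'$-removed sets on $[0,1]$ of Definition~\ref{def:removed} until it reaches the empty set. First I would differentiate: for $x \neq 0$ the product and chain rules give $y'(x) = 2x\sin(1/x) - \cos(1/x)$, while at $x=0$ the difference quotient $\frac{h^2\sin(1/h)}{h} = h\sin(1/h)$ tends to $0$, so $y'(0)=0$. Away from the origin $y'$ is a combination of smooth functions and hence continuous; at the origin the term $2x\sin(1/x)$ vanishes in the limit but $\cos(1/x)$ oscillates in $[-1,1]$, so $\lim_{x\to 0} y'(x)$ does not exist and $y'$ is discontinuous precisely at $0$. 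This is exactly the one-dimensional analogue of the second component of the right-hand term in Example~\ref{ex:firstex}.

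Next I would verify that $y'$ is solvable, so that $\vert y \vert_{SV}$ is well defined. Being a derivative, $y'$ is of Baire class one. For the second condition, let $K \subseteq [0,1]$ be closed; since $y'$ is continuous on $[0,1]\setminus\{0\}$, the set of discontinuity points of $y'\restriction_K$ is contained in $\{0\}$, hence is either $\emptyset$ or $\{0\}$, and in both cases closed. Thus $y'$ is solvable, and the ranking applies.

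I would then compute the sequence of $y'$-removed sets. By construction $E_0 = [0,1]$, and $E_1$ is the discontinuity set of $y' = y'\restriction_{E_0}$, which by the first step equals $\{0\}$. For the next step, $E_2$ is the discontinuity set of $f_1 = y'\restriction_{\{0\}}$; but $0$ is an isolated point of the singleton $\{0\}$ in the induced topology, so $f_1$ is trivially continuous and has no discontinuity points, giving $E_2 = \emptyset$. Since $E_0$ and $E_1$ are nonempty while $E_2$ is empty, the least ordinal $\alpha$ with $E_\alpha = \emptyset$ is $2$, whence $\vert y \vert_{SV} = 2$.

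The computations are routine; the one point that deserves care is the passage from $E_1$ to $E_2$. The conclusion $E_2 = \emptyset$ rests on the fact, emphasized after Definition~\ref{def:removed}, that continuity of the restriction $y'\restriction_{E_1}$ must be read in the subspace topology of $E_1$ as a subset of $\mathbb{R}$: relative to that topology the single remaining point $0$ becomes isolated, so the restricted function is continuous there and the sequence collapses to the empty set at stage $2$ rather than stabilizing at $\{0\}$. This is the same phenomenon that yields $E_2 = \emptyset$ in Example~\ref{ex:firstex}, and it is precisely what prevents the rank from exceeding $2$.
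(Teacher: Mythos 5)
Your proof is correct and follows essentially the same route as the paper, which states this proposition without a separate proof precisely because the computation is the one carried out in Example~\ref{ex:firstex}: $E_1=\{0\}$ is the discontinuity set of $y'$, the restriction $y'\restriction_{E_1}$ is trivially continuous (in the paper's example the analogous restriction is constant), so $E_2=\emptyset$ and the rank is $2$. Your explicit verification of solvability and your emphasis on reading continuity of $y'\restriction_{E_1}$ in the subspace topology are exactly the points the paper flags in the remark after Definition~\ref{def:removed}.
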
 

Nonetheless, the apparent similarities do not translate to an equivalence in terms of ranking for the two different notions. Indeed, we can prove the following theorem: 

\begin{theorem}
\label{thm:diffranking}
We have $KW^{*}_{1}=SV_{1}$. Then for all $1 <\alpha< \omega_1$ we have $ KW^{*}_{\alpha} \subseteq  SV_{\alpha}$. Moreover, for all $1 <\alpha< \omega_1$, we have $ KW^{*}_{\alpha+1} \setminus KW^{*}_{\alpha} \neq  SV_{\alpha+1} \setminus SV_{\alpha}$. 
\end{theorem}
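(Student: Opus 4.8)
The plan is to prove the three clauses by comparing the solvable removed-set sequence $\{E_\beta\}$ of the derivative $y'$ with the Kechris-Woodin sequence $\{P^\beta_{\epsilon,y}\}$, the guiding principle being that on a \emph{fixed} closed set the difference quotients detect at least as much as the pointwise derivative, so the Kechris-Woodin process never terminates before the solvable one; that is, $|y|_{SV}\le|y|_{KW}$ for every $y\in SV$. The first clause is then immediate from the two propositions already recorded: $|y|_{KW}=1\iff y\in C^1([0,1])\iff|y|_{SV}=1$, and every $C^1$ function has bounded continuous, hence solvable, derivative and therefore lies in $SV$; thus $KW^*_1$ and $SV_1$ are both exactly the continuously differentiable elements of $SV$, so $KW^*_1=SV_1$.

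For the second clause I must establish $|y|_{SV}\le|y|_{KW}$, for then $|y|_{KW}\le\alpha$ forces $|y|_{SV}\le\alpha$ and hence $KW^*_\alpha\subseteq SV_\alpha$. The engine is the mean value theorem together with the elementary fact that, for differentiable $y$, the difference quotients $\Delta_y(p,q)$ of intervals $[p,q]$ shrinking onto a point $w$ converge to $y'(w)$. These give the \emph{same-set comparison}: for any closed $A\subseteq[0,1]$, if $y'\restriction_A$ is discontinuous at $x$ with oscillation at least $\delta$, then there are points of $A$ arbitrarily near $x$ carrying derivative values at distance $\ge\delta$ from $y'(x)$, so tiny rational intervals around them, which meet $A$, produce difference quotients differing by at least $\delta/2$; hence $x$ survives the Kechris-Woodin operator at level $\delta/2$ applied to $A$. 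In other words the solvable one-step operator is dominated by the union over $\epsilon$ of the Kechris-Woodin operators on the same set. I would then upgrade this pointwise statement to the whole transfinite hierarchy by introducing, for each fixed $\epsilon>0$, the $\epsilon$-oscillation refinement $E^{\epsilon}_\beta$ of the solvable sequence (keeping only discontinuities of oscillation $\ge\epsilon$), proving by transfinite induction that $E^{\epsilon}_\beta\subseteq P^{\beta}_{\epsilon/2,y}$ with $\epsilon$ held fixed throughout, and finally recovering $|y|_{SV}$ as the supremum over $\epsilon$ of the ranks of these refinements.

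I expect this last passage to be the main obstacle. Keeping the threshold fixed at $\epsilon/2$ across all successor stages is what prevents the parameter from degenerating, and the induction hypothesis $E^{\epsilon}_\beta\subseteq P^{\beta}_{\epsilon/2,y}$ is exactly what makes the intervals produced above admissible for the Kechris-Woodin operator; the delicate point is the mismatch of test sets, namely that the solvable operator probes $y'$ on the full removed set whereas the Kechris-Woodin operator averages $y$ over intervals anchored to the sparser set $P^{\beta}_{\epsilon/2,y}$, and that the full solvable rank must be reconciled with the $\epsilon$-resolved ranks. This reconciliation, which rests on the fact that $y'$ is a Baire-one function, so that on every nonempty closed set its points of continuity are dense, together with the Darboux property of derivatives, is where I expect the real work to lie; careful bookkeeping of which removed points persist at a uniform oscillation level, rather than any single inequality, is the crux.

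For the third clause it suffices to exhibit, for each $\alpha$ with $1<\alpha<\omega_1$, a single $y\in SV$ with $|y|_{KW}=\alpha+1$ and $|y|_{SV}=\alpha$: such a $y$ lies in $KW^*_{\alpha+1}\setminus KW^*_\alpha$ but, as $|y|_{SV}=\alpha$, lies in $SV_\alpha$ and hence not in $SV_{\alpha+1}\setminus SV_\alpha$, so the two difference sets cannot coincide. The separating phenomenon is that difference quotients can oscillate \emph{across} the gaps between removed points even when the restricted pointwise derivative has already become continuous there. The base gadget, giving $\alpha=2$, is built on a sequence $x_n\downarrow 0$: inside each $[x_{n+1},x_n]$ I insert, through the rescaling operation $f[a,b]$ of the preliminaries, which preserves differentiability since it forces $f'(0)=f'(1)=0$, a scaled $t^2\sin(1/t)$ block carrying a small alternating net slope, arranged so that $y'$ is discontinuous only at the points $x_n$ and at $0$ with $y'(x_n)\to y'(0)$; then $y'\restriction_{E_1}$ is continuous and $E_2=\emptyset$, giving $|y|_{SV}=2$, while the across-block average slopes $\Delta_y(x_{n+1},x_n)$ oscillate by a fixed amount so that $0\in P^2_{\epsilon,y}$ yet $P^3_{\epsilon,y}=\emptyset$, since difference quotients of intervals containing $0$ converge to $y'(0)$, giving $|y|_{KW}=3$. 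To reach an arbitrary countable $\alpha$ I would amalgamate, again via $f[a,b]$, scaled copies of lower-rank constructions along convergent sequences in the manner of the rank constructions of \cite{kechris1986ranks} and \cite{Ra91}, so that the solvable rank climbs exactly to $\alpha$ while the extra across-gap oscillation at the top contributes exactly one further Kechris-Woodin level; verifying that the amalgamation adds one level to $|\cdot|_{KW}$ but none to $|\cdot|_{SV}$ is the technical heart of this part.
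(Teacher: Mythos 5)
Your skeleton matches the paper's: clause 1 via the $C^1$ characterizations, clause 2 by transfinite comparison of the removed sets $E_\beta$ with the Kechris--Woodin sets in the same direction ($|y|_{SV}\le|y|_{KW}$, base case being Fact 3.3 of \cite{kechris1986ranks}), clause 3 by exhibiting, for each $\alpha$, one function of KW rank $\alpha+1$ and small solvable rank. The genuine gap is in your fixed-$\epsilon$ machinery for clause 2. The step you defer --- recovering $|y|_{SV}$ as $\sup_\epsilon$ of the ranks of the refined hierarchies $E^\epsilon_\beta$ --- fails as stated, it does not merely resist proof. Because you prune at every stage to oscillation $\ge\epsilon$ relative to the already pruned set, the refined hierarchies can terminate strictly before the full one: let $y'$ have discontinuity points $w_n\to x$ with oscillations $\eta_n\to 0$ but with $y'(w_n)=1$ while $y'(x)=0$ (realizable by superposing amplitude-$\eta_n$ copies of $t^2\sin(1/t)$ at the $w_n$ on a background derivative equal to $1$ at each $w_n$ that dips to $0$ in between on plateaus short enough that $y$ remains differentiable at $x$ with $y'(x)=0$). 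Then $x\in E_2$, since the $w_n$ witness discontinuity of $y'\restriction_{E_1}$ at $x$, so $|y|_{SV}\ge 3$; yet for every fixed $\epsilon$ all but finitely many $w_n$ are expelled from $E^\epsilon_1$, so $x$ is isolated in $E^\epsilon_1$ and $x\notin E^\epsilon_2$, making every $\epsilon$-resolved rank at most $2$. Hence $\sup_\epsilon$ undershoots $|y|_{SV}$, and your containment $E^\epsilon_\beta\subseteq P^\beta_{\epsilon/2,y}$ no longer delivers $|y|_{SV}\le|y|_{KW}$. The paper never thins the solvable side: it proves $E_\alpha\subseteq\bigcup_\epsilon P^\alpha_{\epsilon,y}$ by transfinite induction, keeping the full $E_\alpha$ on the left and taking the union over $\epsilon$ only on the right, so each point carries its own threshold. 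Your ``mismatch of test sets'' worry is exactly the crux, but the repair must leave $E_\beta$ intact rather than $\epsilon$-resolve it.

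For clause 3 your separation logic is correct and identical to the paper's, but you commit to building, for each $\alpha$, a function with $|y|_{SV}=\alpha$ exactly and $|y|_{KW}=\alpha+1$, and you leave the transfinite amalgamation --- what you yourself call the technical heart --- unproved. That is much more than the clause requires and is genuinely hard: exact control of the solvable rank of tree-like amalgams, limit stages included, is essentially the content of the paper's separate Proposition \ref{prop:unboundedsolv} (the functions $y_T$ with the limsup-rank bookkeeping), and on top of it you would need to pin the KW rank to exactly one level higher, which your sketch addresses only at $\alpha=2$. The paper sidesteps the construction entirely: it takes Westrick's functions $g_{T(\alpha)}$ from \cite{westrick2014lightface}, for which $|g_{T(\alpha)}|_{KW}=\alpha+1$ is already known, and verifies only that $g'_{T(\alpha)}$ is bounded, solvable, and identically zero on its closed discontinuity set, so that $E_2=\emptyset$ and $|g_{T(\alpha)}|_{SV}=2$ uniformly in $\alpha$; since $2\le\alpha$, this already places $g_{T(\alpha)}$ in $KW^{*}_{\alpha+1}\setminus KW^{*}_{\alpha}$ but outside $SV_{\alpha+1}\setminus SV_{\alpha}$. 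Your $\alpha=2$ gadget is sound and captures the same phenomenon (quotient oscillation across gaps that the restricted pointwise derivative no longer sees); the efficient completion of your plan is to keep the solvable rank fixed at $2$ for every $\alpha$, as the paper does, rather than climb it to $\alpha$.
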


\begin{proof}
The case of $KW^{*}_{1}=SV_{1}$ is trivial and follows immediately from the above remarks. Indeed, $SV_{1} = KW^{*}_{1}= KW_{1}=C^1([0,1])$ is the set of continuously differentiable functions. To continue, we recall from \cite{kechris1986ranks} that making use of rational numbers or real numbers in the definition of the derivative operator in the Kechris-Woodin ranking produces the same ranking. More formally, let $y \in C([0,1])$, $\epsilon>0$ and $P \subset [0,1]$ closed, if we define the operator as: 
 
\begin{equation*}
\begin{split}
P^{\star}_{\epsilon,y}= \{ x \in P : \forall \delta >0 \exists p < q , r < s \in B_{[0,1]}(x, \delta) \; \text{with} \\
 [p,q] \cap [r,s] \cap P \neq \emptyset \; \text{and} \; \vert \Delta_y (p,q) - \Delta_y (r,s) \vert \geq \epsilon \}
\end{split}
\end{equation*} 

we obtain, for all $\epsilon >0$, the following chain of inequalities $P^{\star}_{\epsilon,y} \subseteq P'_{\epsilon/2,y} \subseteq P^{\star}_{\epsilon/2,y}$. Hence, using rationals or reals in the definition of the involved operator produces the same ranking. Therefore, we will make use of the star derivative operator in this proof. Moreover, for all $y \in SV$ let us use the notation $\{ E_\gamma \}_\gamma$ for the sequence of $y'$-removed sets on $[0,1]$. 

We prove the statement of the theorem by proving separately the two following conditions: 

\begin{enumerate}
\item For all $y \in SV$, for all $\alpha< \omega_1$,  having $\left\vert y \right\vert_{SV}= \alpha$ implies $\left\vert y \right\vert_{KW} \geq \alpha$ \\ 
\item For all $1< \alpha< \omega_1$, $\exists y \in SV$ such that $\left\vert y \right\vert_{KW} = \alpha+1$ and $\left\vert y \right\vert_{SV} = 2$ 
\end{enumerate}

We prove the above conditions separately. 

\begin{enumerate}
\item For all $y \in SV$, for all $\alpha< \omega_1$,  having $\left\vert y \right\vert_{SV}= \alpha$ implies $\left\vert y \right\vert_{KW} \geq \alpha$ \\ 

Note that this statement comes for free if we show that, for all $\alpha< \omega_1$, for all $y \in SV$, we have $E_\alpha \subseteq \bigcup_\epsilon P_{\epsilon,y}^{\alpha}$. Indeed, if $\left\vert y \right\vert_{SV}= \alpha$, that means that $E_\alpha = \emptyset$, and therefore, since $E_\alpha \subseteq \bigcup_\epsilon P_{\epsilon,y}^{\alpha}$, we can either have $\bigcup_\epsilon P_{\epsilon,y}^{\alpha}= \emptyset$, in which case $\left\vert y \right\vert_{KW}= \left\vert y \right\vert_{SV} = \alpha$, or we can have $\bigcup_\epsilon P_{\epsilon,y}^{\alpha} \neq \emptyset$, which means $\left\vert y \right\vert_{KW}> \left\vert y \right\vert_{SV}= \alpha$. But if $E_\alpha \neq \emptyset$, which means $\left\vert y \right\vert_{SV} > \alpha$, then since $E_\alpha \subseteq \bigcup_\epsilon P_{\epsilon,y}^{\alpha}$, that implies $\bigcup_\epsilon P_{\epsilon,y}^{\alpha} \neq \emptyset$, which in turn means that $\left\vert y \right\vert_{KW}> \alpha$. We now prove  by transfinite induction that for all $\alpha< \omega_1$, for all $y \in SV$, we have $E_\alpha \subseteq \bigcup_\epsilon P_{\epsilon,y}^{\alpha}$. 

\begin{itemize}
\item {\bf{Base of the induction}}: for all $y \in SV$ we prove that, for all $x \in [0,1]$, if $x \in E_1$ then $x \in \bigcup_\epsilon P_{\epsilon,y}^{1}$. 

Recall that by definition $E_1$ corresponds with the set of discontinuity points of $y'$ on $[0,1]$. Hence, the desired result is directly proved in \cite{kechris1986ranks}[Fact 3.3] \\

\item {\bf{Successor ordinals}}: for all $\alpha< \omega_1$, for all $y \in SV$, we prove that, if $E_\alpha \subseteq \bigcup_\epsilon P_{\epsilon,y}^{\alpha}$, then $E_{\alpha+1} \subseteq \bigcup_\epsilon (P_{\epsilon,y}^{\alpha})^{\star}$. 

For all $y \in SV$, since for all $\alpha< \omega_1$ we know that $E_{\alpha+1} \subset E_{\alpha}$ the induction hypothesis implies $E_{\alpha+1} \subseteq \bigcup_\epsilon P_{\epsilon,y}^{\alpha}$. Moreover, recall that:

\begin{equation*}
 E_{\alpha+1}= \{ x \in E_{\alpha} : y' \restriction_{E_{\alpha}} \; \text{is discontinuous on } \; x \; \}
\end{equation*}

For all points $x \in  E_{\alpha+1}$, since $y' \restriction_{E_{\alpha}}$ is discontinuous on $x$, it means that $x$ is an accumulation point for $E_{\alpha} \subseteq \bigcup_\epsilon P_{\epsilon,y}^{\alpha}$. Therefore, by definition of a discontinuity point, this means that: 

\begin{equation*}
\begin{split}
\forall \delta >0 \; \exists \epsilon>0 \; \text{ such that } \; \exists p(\epsilon) < q(\epsilon) , r(\epsilon) < s(\epsilon) \in B_{E_{\alpha}}(x, \delta) \\
 \text{with } \; [p(\epsilon),q(\epsilon)] \cap [r(\epsilon),s(\epsilon)] \cap x \neq \emptyset \; \text{and} \; \left\vert \Delta_y (p(\epsilon),q(\epsilon)) - \Delta_y (r(\epsilon),s(\epsilon)) \right\vert \geq \epsilon 
\end{split}
\end{equation*}

Hence, that means that: 

\begin{equation*}
\begin{split}
x \in \bigcup_\epsilon \{ z \in  P_{\epsilon,y}^{\alpha} : \forall \delta >0 \exists p < q , r < s \in B_{[0,1]}(z, \delta) \; \text{with} \\
 [p,q] \cap [r,s] \cap  P_{\epsilon,y}^{\alpha} \neq \emptyset \; \text{and} \; \vert \Delta_y (p,q) - \Delta_y (r,s) \vert \geq \epsilon \} \\
 = \bigcup_\epsilon (P_{\epsilon,y}^{\alpha})^{\star}= \bigcup_\epsilon P_{\epsilon,y}^{\alpha+1}
\end{split}
\end{equation*} 

\item {\bf{Limit ordinals}}: for all $y \in SV$ we prove that, if $E_\alpha \subseteq \bigcup_\epsilon P_{\epsilon,y}^{\alpha}$ for all $\alpha < \lambda$, then $E_{\lambda} \subseteq \bigcup_\epsilon P_{\epsilon,y}^{\lambda}$. 

Once again, for all $y \in SV$ the induction hypothesis implies $E_{\lambda} \subseteq \bigcup_\epsilon P_{\epsilon,y}^{\alpha}$ for all $\alpha < \lambda$. Note that this, combined with the previous case of the induction for successor ordinals, implies also that $E_{\lambda} \subseteq \bigcup_\epsilon P_{\epsilon,y}^{\alpha+1} = \bigcup_\epsilon ( P_{\epsilon,y}^{\alpha})^{\star}$ for all $\alpha < \lambda$. But since by definition $P_{\epsilon,y}^{\lambda}= \bigcap_{\alpha< \lambda} (P^{\alpha}_{y,\epsilon})^{\star}_{y,\epsilon}$ it follows that $E_{\lambda} \subseteq \bigcup_\epsilon P_{\epsilon,y}^{\lambda}$. 

\end{itemize}

We have therefore completed the proof of 1. \\

\item For all $1< \alpha< \omega_1$, $\exists y \in SV$ such that $\left\vert y \right\vert_{KW} = \alpha+1$ and $\left\vert y \right\vert_{SV} = 2$ \\

Note that proving the above statement implies that for all $1 <\alpha< \omega_1$ it is true that $ KW^{*}_{\alpha+1} \setminus KW^{*}_{\alpha} \neq  SV_{\alpha+1} \setminus SV_{\alpha}$. 

We make use of the techinque developed in \cite{westrick2014lightface} in order to define, for all $\alpha < \omega_1$, a differentiable function $y$ such that $\left\vert y \right\vert_{KW}= \alpha+1$. The definition is based on the idea of associating to each well-founded tree $T$, a specific function $g_T$ whose derivative has a complexity ranking connected to the limsup rank of the tree $T$. More precisely, this is done in the following manner: let $r:[0,1] \rightarrow \mathbb{R}$ be a computable function satisfying: 

\begin{itemize}
\item $r$ is continuously differentiable 
\item $r\left(\frac{1}{2}\right)=\frac{1}{2}$ 
\item $r(0)=r(1)=r^{\prime}(0)=r^{\prime}(1)=0$ 
\item $\|r\|<1$ and $\left\|r^{\prime}\right\|<2$
\end{itemize}

Then let $\left\{\left[a_n, b_n\right]\right\}_{n \in \mathbb{N}_0}$ be any computable sequence of intervals with rational endpoints satisfying: 

\begin{itemize}
\item Each interval is contained in $\left(\frac{1}{4}, \frac{1}{2}\right)$ 
\item $b_{n+1}<a_n<b_n$ for each $n$. 
\item $\lim _{n \rightarrow \infty} a_n=\frac{1}{4}$ 
\item $b_n-a_n<\left(a_n-\frac{1}{4}\right)^2$ for each $n$
\end{itemize}

Then for any well-founded tree $T \in WF$, define function $g_T: [0,1] \to \mathbb{R}$ as follows: 

\begin{equation}
g_T(x)= 
\begin{cases}
0 \hspace{0.7 cm} \text{  if  } T= \emptyset \\
r(x) + \sum_{n=0}^{\infty} g_{T_{n}}[a_n,b_n] \hspace{0.7 cm} \text{otherwise}
\end{cases}
\end{equation}

Where as usual for all $n \in \mathbb{N}_0$ the notation $T_n$ denotes the nth subtree of $T$, i.e.\ $T_n= \{ \sigma : n^\frown \sigma \in T \}$ and $g_{T_{n}}[a_n,b_n]$ denotes the function which is identically 0 outside of $I_n$, and instead values $g_{T_{n}}[a_n,b_n](x)=(b_n-a_n) g_{T_{n}} \left(\frac{x-a_n}{b_n-a_n}\right)$ for all $x \in [a_n,b_n]$.

It has been proved in \cite{westrick2014lightface} that for all $\alpha < \omega_1$ there exists a tree $T(\alpha)$ with $\left\vert T(\alpha) \right\vert_{ls}=\alpha+1$ such that $g_{T(\alpha)} \in \mathcal{D}$ and $\left\vert g_{T(\alpha)} \right\vert_{KW}= \alpha+1$. It is then clear from the above discussions that if $1<\alpha < \omega_1$ then such $g'_{T(\alpha)}$ is discontinuous on $[0,1]$; indeed, this follows directly from the base case of the transfinite induction above. We want now to show that $g'_{T(\alpha)}$ is bounded and solvable. For all $1<\alpha < \omega_1$ it is not hard to see that $g'_{T(\alpha)}$ is bounded by induction on the rank of the tree; indeed if every $g'_{T_n (\alpha)}$ is bounded then $g'_{T(\alpha)}$ is as well, being the sum of bounded functions that have disjoint support. Indicating the set of discontinuity points of $g'_{T(\alpha)}$ as $D_{g'_{T(\alpha)}}$, we show $D_{g'_{T(\alpha)}}$ is a closed set and that $g'_{T(\alpha)} \restriction_{D_{g'_{T(\alpha)}}}$ is the constant function identically equal to zero. First, it follows from the definition of $g_{T(\alpha)}$ above that the discontinuity points for $g'_{T(\alpha)}$ are of the form $ a_n + \frac{b_n- a_n}{4}$ for some rational interval of the form $[a_n,b_n]$ where $n \in \mathbb{N}_0$. Moreover, each of these discontinuity points $p\in D_{g'_{T(\alpha)}}$ is the limit point for at least one sequence of such intervals. If only a finite number of such intervals in the sequence contains a discontinuity point, then clearly $p$ is an isolated point of $D_{g'_{T(\alpha)}}$ and therefore belongs to its closure, i.e.\ $p \in \operatorname{cl} ( D_{g'_{T(\alpha)}})$. Otherwise, there is a countable sequence of points in $D_{g'_{T(\alpha)}}$ converging to $p$. In any case, for all $p \in D_{g'_{T(\alpha)}}$ we have $p \in \operatorname{cl} ( D_{g'_{T(\alpha)}})$, which means that $ D_{g'_{T(\alpha)}}$ is a closed set. Finally, we show that $g'_{T(\alpha)} \restriction_{D_{g'_{T(\alpha)}}}$ is the constant function identically equal to zero. From the definition above it follows that on any neighborhood of any discontinuity point the behavior of $g_{T(\alpha)}$ is bounded quadratically, i.e.\ for all points $p \in D_{g'_{T(\alpha)}}$, for small enough $\epsilon$ we have $\left\vert g_{T(\alpha)}(x) \right\vert \leq (x-p)^2$ for all $x \in (p-\epsilon,p+\epsilon)$. This implies that for all points $p \in D_{g'_{T(\alpha)}}$ we have $g'_{T(\alpha)}(p) = \lim_{\epsilon \to 0}  \frac{\left\vert g_{T(\alpha)}(p+\epsilon)- g_{T(\alpha)}(p) \right\vert}{\epsilon} \leq \lim_{\epsilon \to 0} ( \epsilon^2/\epsilon = \epsilon )=0$. But then, since $g'_{T(\alpha)} \restriction_{D_{g'_{T(\alpha)}}}$ is the constant function identically equal to zero, it means that $E_2= \emptyset$ and so $\left\vert g_{T(\alpha)} \right\vert_{SV}= 2$. This concludes the proof of 2. and therefore concludes the proof of the theorem. 

\end{enumerate}
\end{proof}



The existing relation between solvable and Kechris-Woodin rankings as described in the theorem above allows to devise further relations with other existing rankings for differentiable functions such as the Denjoy ranking and the Zalcwasser ranking. Since the exact definitions of such rankings are not essential for the scope of this paper, we limit ourselves to providing intuitive descriptions while discussing such relations. 

The Denjoy ranking is a ranking that is focused on the number of steps required for the Denjoy totalization process to converge. In this sense, it has a close similarity with our solvable ranking, which is also based on the termination of a similar process based on transfinite recursion. Nonetheless, its focus is on the integration of the derivative, while ours is on the convergence of our search method within the domain. This generates relevant differences among the two rankings as we now show. Moreover, the Denjoy ranking can be considered from two different angles or perspectives, depending on which base space is chosen for the description. As an example, in \cite{kechris1986ranks} and \cite{ki1997denjoy}, the Denjoy ranking is defined over the set of differentiable functions $\mathcal{D}$, and the ranking is attributed to a function $y \in \mathcal{D}$ according to how many steps of the Denjoy process are required from $y'$ to obtain $y$. In \cite{Wes20} instead, the ranking is directly attributed to the derivatives $y'$, considered as elements of the Polish space of measurable functions. According to this description, it is proved there \cite{Wes20}[Theorem 22] that the set of Denjoy integrable functions is $\boldsymbol{\Pi}_1^1$ complete in the space of measurable functions. This description is more general since it includes not only derivatives but also functions that are derivatives almost everywhere. In any case, it is clear that, since the base spaces considered are different for the two approaches, the rankings are also not directly comparable. For our analysis and comparison with solvable functions, we choose to look at the Denjoy ranking from the point of view of \cite{kechris1986ranks} and \cite{ki1997denjoy}, since this attitude aligns with the spirit of our definition. In this context, a theorem from \cite{Ra91} proves that for every differentiable function, its Kechris-Woodin ranking is greater or equal to its Denjoy ranking. Then, thanks to our theorem above, this implies that for every function in $SV$, its solvable ranking is also greater or equal to its Denjoy ranking. To be precise, this is just an immediate consequence of a trivial observation connected with Lebesgue integration. Indeed, the set of solvable functions $SV$ contains only differentiable functions with bounded derivatives, which are all by definition Lebesgue integrable functions. Hence, for all functions $y \in SV$ we have $\left\vert y \right\vert_{D}=1$, where $\left\vert y \right\vert_{D}$ stands for the Denjoy ranking of $y$ according to the spirit of \cite{kechris1986ranks}. The latter also implies that the Denjoy ranking as intended this way can not be a $\boldsymbol{\Pi}_1^1$ norm on $\mathcal{D}$, since that would mean that the set of all functions of Denjoy ranking equal 1 should be Borel while at the same time include all bounded derivatives, whose Kechris-Woodin rankings are unbounded below $\omega_1$. We will instead show that the solvable ranking is indeed a $\boldsymbol{\Pi}_1^1$ norm on $SV$. 

The Zalcwasser ranking shares a feature with the solvable ranking: it is not strictly a ranking for differentiable functions. But while the solvable ranking on one dimension introduces a ranking for a proper subset of differentiable functions, the Zalcwasser ranking introduces a ranking for a set that contains all differentiable functions. Indeed, in this context, the base space is not anymore $C([0,1])$ but instead $EC$, i.e.\ the set of continuous functions with everywhere convergent Fourier series. It is known \cite{AK87} that $EC$ is a $\boldsymbol{\Pi}_1^1$ complete subset of $C(\mathbb{T})$, where $\mathbb{T}$ represents the unit circle, i.e.\ $[0, 2\pi]$ with $0$ identified with $2\pi$. To provide an intuitive description, the Zalcwasser ranking of a function $y \in EC$, which is indicated with $\left\vert y \right\vert_{Z}$, rates how uniform is the convergence of the Fourier serie of $y$. It is demonstrated in \cite{AK87} that the Zalcwasser ranking is a $\boldsymbol{\Pi}_1^1$ norm on $EC$. If we indicate with $\mathcal{D}(\mathbb{T})$ the set of differentiable functions over the unit circle, then we have $\mathcal{D}(\mathbb{T}) \subseteq EC$. Hence, once restricting the analysis to the unit circle instead of the unit interval, it is possible to compare the Zalcwasser ranking with the other rankings for differentiable functions. While the Denjoy ranking and the Zalcwasser ranking are proved to be incomparable, \cite{Ki95} illustrates that for every differentiable function, its Kechris-Woodin ranking is greater or equal to its Zalcwasser ranking. If we call $SV(\mathbb{T})$ the set of solvable functions defined exactly as before but over the unit circle $\mathbb{T}$, then our theorem above yields the following corollary:

\begin{corollary}
For every solvable function $y \in SV(\mathbb{T})$ we have $\left\vert y \right\vert_{SV} \geq \left\vert y \right\vert_{Z}$
\end{corollary}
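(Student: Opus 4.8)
The plan is to obtain the inequality by chaining two comparisons: one linking the solvable ranking to the Kechris--Woodin ranking, and one linking the Kechris--Woodin ranking to the Zalcwasser ranking. First I would check that the Zalcwasser ranking is even defined on the relevant functions. Since every $y \in SV(\mathbb{T})$ is by definition differentiable, we have $SV(\mathbb{T}) \subseteq \mathcal{D}(\mathbb{T}) \subseteq EC$, so each such $y$ lies in the space $EC$ on which $\left\vert \cdot \right\vert_{Z}$ is a $\boldsymbol{\Pi}_1^1$ norm by \cite{AK87}. Thus $\left\vert y \right\vert_{Z}$ makes sense for every $y \in SV(\mathbb{T})$, and the statement is not vacuous.

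Second, I would transfer Theorem~\ref{thm:diffranking} from the interval $[0,1]$ to the circle $\mathbb{T}$, so as to obtain $\left\vert y \right\vert_{SV} \geq \left\vert y \right\vert_{KW}$ for every $y \in SV(\mathbb{T})$. The point here is that both constructions underlying these rankings are purely local: the sequence of $y'$-removed sets is built from the discontinuity set of $y'\restriction_{E_\alpha}$, and the Kechris--Woodin derived sets $P^\alpha_{\epsilon,y}$ are built from oscillations of difference quotients $\Delta_y$ on small balls. Neither depends on whether the ambient domain is $[0,1]$ or $\mathbb{T}$, once balls, closures and difference quotients are read in the circle metric; hence the comparison proved in Theorem~\ref{thm:diffranking} carries over verbatim to $\mathbb{T}$. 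Third, I would invoke the result of \cite{Ki95}, which gives $\left\vert y \right\vert_{KW} \geq \left\vert y \right\vert_{Z}$ for every differentiable function on the circle, and which applies here because $SV(\mathbb{T}) \subseteq \mathcal{D}(\mathbb{T})$. Composing the two inequalities yields $\left\vert y \right\vert_{SV} \geq \left\vert y \right\vert_{KW} \geq \left\vert y \right\vert_{Z}$, which is exactly the corollary.

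The chaining itself and the domain inclusions $SV(\mathbb{T}) \subseteq \mathcal{D}(\mathbb{T}) \subseteq EC$ are immediate; the only delicate step, and the one I expect to be the main (though still modest) obstacle, is the transfer of Theorem~\ref{thm:diffranking} to $\mathbb{T}$. There one must confirm that the two rankings are normalized identically on the circle and on the interval, and that passing to the quotient that identifies $0$ with $2\pi$ does not create or destroy discontinuity points of $y'$ nor alter the behaviour of $\Delta_y$ near the seam; granting this, the inequality is inherited directly from the interval case and no new induction is required.
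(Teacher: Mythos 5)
Your proof is correct and follows essentially the same route as the paper: the corollary is obtained by chaining Theorem~\ref{thm:diffranking} (giving $\left\vert y \right\vert_{SV} \geq \left\vert y \right\vert_{KW}$ on $SV(\mathbb{T})$, defined as before but over the circle) with the result of \cite{Ki95} that $\left\vert y \right\vert_{KW} \geq \left\vert y \right\vert_{Z}$ for differentiable functions on $\mathbb{T}$. Your explicit check that both rankings are local and therefore transfer verbatim from $[0,1]$ to $\mathbb{T}$ is a sensible elaboration of a step the paper leaves implicit, but it does not change the argument.
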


Note that the statement of the above corollary is not as trivial as its analog for the case of the Denjoy ranking, since it is known that for every countable ordinal $\alpha < \omega_1$ there exists a function in $\mathcal{D}(\mathbb{T})$ satisfying $ \left\vert y \right\vert_{Z}= \alpha$. Therefore, Theorem \ref{thm:diffranking} is needed to prove the above corollary. 

\subsection{Unboundness below $\omega_1$}

We now introduce a Borel encoding from well-founded trees to solvable functions that reflects the solvable ranking, i.e.\ a Borel function associating to each tree $T \in WF$ that satisfies $\left\vert T \right\vert_{ls}= \alpha$ for some $\alpha< \omega_1$ a differentiable function $y_T$ such that $\left\vert y_T \right\vert_{SV} = \alpha+1$. 


We start by providing the following definition: 

\begin{definition}
\label{def:treefunction}
Let $\{ I_n \}_{n \in \mathbb{N}_0}$ be an enumeration of the removed open intervals from the construction of the Cantor set $C$. Call $p: [0,1] \to \mathbb{R}$ the function in $SV$ defined by cases as:

\begin{equation}
\begin{cases}
p(x)= x^2 \sin \left( \frac{1}{x} \right) \hspace{0.7 cm} \text{  if  } x \in (0, \bar{x}] \\
p(x)= \bar{x}^2 \sin \left( \frac{1}{\bar{x}} \right) \hspace{0.7 cm} \text{  if  } x \in [\bar{x}, 1-\bar{x}] \\
p(x)= -(x-1)^2 \sin \left( \frac{1}{x-1} \right) \hspace{0.7 cm} \text{  if  } x \in [1-\bar{x}, 1) \\
p(x)= 0 \hspace{0.7 cm} \text{ otherwise}
\end{cases}
\end{equation}

where the point $\bar{x}$ is the greatest point in $[0,1]$ such that $\bar{x} < \frac{1}{2}$ and that $ 2\bar{x} \sin{ \left( \frac{1}{\bar{x}}\right) } - \cos{ \left( \frac{1}{\bar{x}} \right)}=0$. \\

For all $T \in WF$ define $y_T: [0,1] \to \mathbb{R}$ as the following function:
\begin{equation}
y_T(x)= 
\begin{cases}
0 \hspace{0.7 cm} \text{  if  } T= \emptyset \\
p(x) + \frac{1}{4} \sum_{n=0}^{\infty} y_{T_n}[I_n] \hspace{0.7 cm} \text{otherwise}
\end{cases}
\end{equation}
\end{definition}

The function defined this way satisfies some nice properties expressed by the following proposition: 

\begin{proposition}
For all $T \in WF$, let $y_T$ be the function defined above. Then: 
\begin{enumerate}
\item If $T$ is a computable tree, $y_T$ is computable in $T$
\item $y_T(0)=y_T(1)=y'_T(0)=y'_T(1)=0$
\item $y_T$ is differentiable and we have $\left\Vert y_T \right\Vert<2$ and $\left\Vert y'_T \right\Vert<2$.
\item $y_T$ is solvable 
\end{enumerate} 
\end{proposition}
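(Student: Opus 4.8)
The plan is to prove the four properties in the order listed, since each builds naturally on the structure of the recursive definition, and to argue everything by transfinite induction on the rank $\vert T \vert$ of the well-founded tree $T$. The base case $T = \emptyset$ is trivial for all four claims, since $y_T \equiv 0$. For the inductive step I would assume that all four properties hold for each subtree $T_n$ (which has strictly smaller rank) and establish them for $T$ itself, crucially exploiting that the functions $y_{T_n}[I_n]$ have pairwise disjoint supports, namely the disjoint intervals $I_n$ removed in the Cantor construction, and that each is scaled by $\frac{1}{4}$ together with the length factor $(b_n - a_n)$ coming from the operation $f[a,b]$.

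For property (1), computability in $T$ should follow from the fact that $p$ is a fixed computable function, the enumeration $\{I_n\}$ of Cantor gaps is computable, and the recursion is uniformly computable: given an oracle for $T$ and a precision, only finitely many terms $y_{T_n}[I_n]$ contribute above threshold on any subinterval because the supports are disjoint and the lengths $\vert I_n\vert$ shrink. For property (2), I would check the boundary conditions directly: $p(0)=p(1)=0$ and $p'(0)=p'(1)=0$ hold by the definition of $p$ (it is identically $0$ near the endpoints by the ``otherwise'' clause, or vanishes quadratically), and since each $I_n \subset (0,1)$ is bounded away from $0$ and $1$, the summed terms do not affect the behavior at the endpoints; the remark in the notation section guarantees $f[a,b]$ preserves the vanishing of value and derivative at the endpoints of each $I_n$.

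For property (3), differentiability away from the limit points of the $\{I_n\}$ is immediate since locally $y_T$ agrees with $p$ plus a single smooth rescaled copy of $y_{T_n}$; the delicate points are the accumulation points of the interval endpoints (in particular points of the Cantor set), where one must verify the derivative exists and equals zero, using the quadratic vanishing of $p$ near $0$ and $1$ and the length bound built into the $f[a,b]$ scaling. The uniform bounds $\Vert y_T\Vert < 2$ and $\Vert y_T'\Vert < 2$ should come from the $\frac{1}{4}$ factor combined with disjoint supports: at any point only one summand is active, so the sup norm is controlled by $\Vert p\Vert$ plus $\frac14$ times the inductive bound, and one checks the geometric-type estimate closes. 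For property (4), solvability requires showing $y_T'$ is Baire class one and that its discontinuity set restricted to any closed set is closed; I would argue the discontinuities of $y_T'$ consist of the (single) discontinuity of $p'$ together with the rescaled discontinuity sets of each $y_{T_n}'$, and that on the limit points the restriction of $y_T'$ is continuous because $y_T'$ vanishes there, mirroring the argument already used for $g_{T(\alpha)}$ in the proof of Theorem \ref{thm:diffranking}.

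I expect property (3), specifically the verification of differentiability with derivative zero at the accumulation points of the interval endpoints (the Cantor-set points), to be the main obstacle, since this is where the interplay between the quadratic decay of $p$, the shrinking interval lengths, and the recursive scaling must be controlled simultaneously; the uniform norm bound $\Vert y_T'\Vert < 2$ is the other place where the constants must be tracked carefully to ensure the induction does not blow up. All of these estimates, however, closely parallel the techniques of \cite{westrick2014lightface} already invoked in the proof of Theorem \ref{thm:diffranking}, so the argument should be a controlled adaptation rather than a genuinely new difficulty.
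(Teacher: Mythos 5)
Your overall skeleton — transfinite induction on the rank of $T$, exploiting the pairwise disjoint supports of the summands $y_{T_n}[I_n]$, the factor $\frac{1}{4}$ together with the length scaling to close the norm bounds, and local comparison of $y_T$ with $p$ near Cantor points — is exactly the paper's strategy, and properties (1) and (2) are handled as the paper does. However, there is a genuine error at the heart of your plan for (3) and (4): you propose to verify that at the accumulation points (the Cantor-set points $c$) the derivative "exists and equals zero." It does not. Locally near $c \in C$ one has $y_T = p + \frac{1}{4}y_{T_n}[I_n]$ with the perturbation uniformly small, and the paper shows the difference quotients converge to $p'(c)$, i.e.\ $y'_T(c) = p'(c)$; since $p'(x) = 2x\sin(1/x) - \cos(1/x)$ on $(0,\bar{x}]$ and $C$ contains points of $(0,\bar{x})$ (e.g.\ $3^{-k}$), the value $p'(c)$ is typically nonzero. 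Your verification as stated (difference quotients tending to $0$) is therefore impossible to carry out, and the correct target value $p'(c)$ is what your estimates must aim at.

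This error propagates into your argument for (4): you claim "on the limit points the restriction of $y'_T$ is continuous because $y'_T$ vanishes there, mirroring the argument already used for $g_{T(\alpha)}$." Both halves are wrong: $y'_T \restriction_C = p' \restriction_C$ is nonzero in general, and it is \emph{not} everywhere continuous — it is discontinuous precisely at $\{0,1\}$, because $p'$ oscillates without a limit as $x \to 0$ and as $x \to 1$ and $C$ accumulates at both endpoints. Solvability holds because this discontinuity set is \emph{closed} (and likewise for every closed $K \subseteq C$), not because it is empty. More importantly, mirroring Westrick's $g_{T(\alpha)}$ here defeats the very purpose of the construction: the paper explicitly contrasts $y_T$ with $g_T$ on exactly this point — $g'_T$ is forced to vanish at its discontinuities, whereas $y'_T$ deliberately retains the values of $p'$ on $C$ so that the removed-set sequence does not terminate after one step. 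If your claim were correct ($y'_T \equiv 0$ on $C$ with continuous restriction), then $E_2 = \emptyset$ and $\left\vert y_T \right\vert_{SV} = 2$ for \emph{every} nonempty tree, which would collapse Proposition \ref{prop:unboundedsolv} and the unboundedness result the whole section is built for. (Two minor slips in the same vein: $p$ is not "identically $0$ near the endpoints" — it is $x^2\sin(1/x)$ near $0$ and the mirrored expression near $1$, vanishing only quadratically — and $p'$ has two discontinuity points, $0$ and $1$, not a single one.)
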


\begin{proof}
If $T$ is empty the statement is trivial. We therefore prove it for all nonempty trees $T \in WF$. We prove these properties by induction on the rank of the tree. For the base case, it is easy to check that all the properties are satisfied due to the behavior of function $p$ which is very similar to the function already analyzed for Example \ref{ex:firstex}. Assume now that all the properties hold for all trees with rank less than $\left\vert T \right\vert$. Then on any interval whose closure does not contain a point in the Cantor set $C$ the function $y_T$ is equal to a finite sum of computable functions and so it is computable. The fact that the sum is finite follows from the fact that $T$ has no infinite branches being well-founded. Let us now analyze computability for the points in $C$. For all $c \in C$, for small enough $\epsilon$, consider the open interval $(c-\epsilon, c+\epsilon)$; for all $x \in (c-\epsilon, c+\epsilon)$ we have that either $x \in C$ or $x \in I_n=(a_n,b_n)$ for some $n \in \mathbb{N}_0$. This is because all removed intervals from the construction of $C$ are disjoint. In the first case, we have $y_T(x)= p(x)$ while in the second case instead $y_T(x)= p(x) + \frac{1}{4} y_{T_n}[I_n](x)$, since all the other terms in the sum equal to zero. Since the size of the removed intervals is decreasing exponentially it is always possible to choose $\epsilon$ small enough such that for all $I_n \cap  (c-\epsilon, c+\epsilon)$ with $b_n, a_n \neq c$ we have $b_n - a_n < \epsilon$. Therefore, for all $x \in (c-\epsilon, c+\epsilon)$ we have $\left\vert y_T(x) - p(x)\right\vert <  \left\Vert y_{T_n} \right\Vert \epsilon = 2 \epsilon$ which means, since $p$ is computable, that $y_T$ is computable.

The fact that $y_T(0)=y_T(1)=y'_T(0)=y'_T(1)=0$ is trivially satisfied by definition. For differentiability, we get in the same fashion that if $y_{T_n}(0)=y_{T_n}(1)=y'_{T_n}(0)=y'_{T_n}(1)=0$ and each $y_{T_n}$ is differentiable for all $n \in \mathbb{N}_0$ then also each $y_{{T}_n}[I_n]$ is differentiable. Then once again $y_T$ is surely differentiable on $[0,1] \setminus C$ since there it is equal to a finite sum of differentiable functions. On the other hand, we have just shown that for all $c \in C$, for small enough $\epsilon$ and all $x \in (c-\epsilon, c+\epsilon)$ we have $\left\vert y_T(x) - p(x)\right\vert <  2 \epsilon$ which implies that $y_T$ is differentiable also on $C$ and that for all $c \in C$ we have $y'_T(c)=p'(c)$.

We now prove that $\left\Vert y_T \right\Vert<2$ and $\left\Vert y'_T \right\Vert<2$ by induction. Since all the removed intervals are disjoint, and their length is always bounded by $\frac{1}{3}$, for all $x \in [0,1] \setminus C$ we have: 

\begin{equation*}
 \left\vert y_T(x) \right\vert \leq \left\vert p(x) \right\vert + \frac{1}{12} \left\Vert y_{T_n} \right\Vert \leq 1+\frac{1}{6}<2
\end{equation*}

and:

\begin{equation*}
 \left\vert y'_T(x) \right\vert \leq \left\vert p'(x) \right\vert + \frac{1}{4} \left\Vert y'_{T_n} \right\Vert < \frac{3}{2} +\frac{1}{2}=2
\end{equation*}

Instead, since for all $c \in C$, for small enough $\epsilon$ and for all $x \in (c-\epsilon, c+\epsilon)$ we have $\left\vert y_T(x) - p(x)\right\vert <  2 \epsilon$ that implies that $y_T(c)=p(c)$ and so $ \left\vert y_T(c) \right\vert <2$. In the same way, since for all $c \in C$ we have $y'_T(c)=p'(c)$ it follows $\left\vert y'_T(c) \right\vert <2$.

Finally, it is easy to see that $y_T$ is solvable; indeed once again the statement is true for all points in $[0,1] \setminus C$ by induction hypothesis since each of the $y_{T_n}$ for all $n \in \mathbb{N}_0$ is solvable. On $C$, which is a closed set, it is clear that $y'_T$ is discontinuous. Moreover, since $y'_{T} \restriction_C=p'\restriction_C$, that means that for each closed subset $K$ of $C$ the set of discontinuity points of $y'_{T} \restriction_K$ is also a closed subset of $K$. Indeed, $y'_{T} \restriction_C=p'\restriction_C$ is discontinuous only on $\{ 0, 1\}$. 

\end{proof}

This last proposition sets the premises for the next proposition, which establishes the possibility of connecting the limsup rank for $WF$ to the solvable ranking for solvable derivatives. But since the definition of the function $y_T$ is quite involved and hard to visualize at first sight,  before proceeding, we provide some examples of the simplest cases to help the intuition. 

Let us start by considering a well-founded tree $T \in WF$. If this tree is empty, then $y_T$ is just the constant function with value $0$, which is trivially continuously differentiable. Hence, $\left\vert y_T \right\vert_{SV}= 1 = \left\vert T \right\vert_{ls}+1$. Instead, let us assume $T$ is just a root, which means that for all $n$ we have $T_n= \emptyset$; and so $\sup_n \left\vert T_n \right\vert_{ls} = \limsup_n \left\vert T_n \right\vert_{ls} = 0$, from which it follows $\left\vert T \right\vert_{ls} =1$. In such case then $y_{T}= p$. It is easy to see that its derivative values $0$ in the points $0$ and $1$ and it is discontinuous in such points. Because of one previous example, we know that this function is solvable and that its solvable rank is $2$. Hence, also in this case, $\left\vert y_T \right\vert_{SV}= \left\vert T \right\vert_{ls}+1$. Let us now assume that the tree $T$ has a root and one child. This means that the related function has the form $y_{T}(x)= p(x) + \frac{1}{4} p[1/3,2/3](x)$ for all $x \in [0,1]$. This function can be described by cases, but more than the exact expression, what matters for the intuition is that it is a differentiable function and its derivative is bounded and discontinuous on the points $0$, $1/3$, $2/3$, $1$. Moreover, we have $y'_{T}(0)=y'_{T}(1)=0$, $y'_{T}(1/3)=p'(1/3)$ and $y'_{T}(2/3)=p'(2/3)$. Hence, also in this case we have $\left\vert y_{T} \right\vert_{SV}=2$ and $\left\vert y_T \right\vert_{SV}= \left\vert T \right\vert_{ls}+1$. An interesting case is the case of a tree $T$ with one root with infinite children. In this case, if we indicate with $(a_n, b_n)$ the $n$th removed interval in the enumeration for the Cantor set, then we have $y_{T}(x)= p(x) + \frac{1}{4} \sum_{n=0}^{\infty} p[a_n,b_n](x)$ for all $x \in [0,1]$, which once again can be expressed by cases. It is not hard to see that this function is differentiable and that its derivative is bounded. Moreover, $y'_T$ is discontinuous in all points of the Cantor set $C$ and we have $y'_T(c) = p'(c)$ for all $c \in C$. It then follows that $y'_T \restriction_C$ is discontinuous in $0$ and $1$ where it values $0$. Therefore, this means that in this case $\left\vert y_{T} \right\vert_{SV}=3= \left\vert T \right\vert_{ls}+1$. 

Intuitively, this brief description of the behavior of $y_T$ for these simple trees of low rank already outlines one of the key differences between this function and the function $g_T$ as described in the proof of Theorem \ref{thm:diffranking} and as originally defined in \cite{westrick2014lightface}. Both definitions have the goal of associating to each well-founded tree a specific differentiable function that reflects the limsup ranking of the tree. Nonetheless, $g_T$ uniformly achieves this goal by always forcing its derivative $g'_T$ to value zero in each of its discontinuity points, no matter how \emph{complex} the derivative is in these points according to the Kechris-Woodin notion of complexity. Instead, the design of $y_T$ allows its derivative $y'_T$ to automatically keep track of such complexity in each of these points just by storing such information in the values that $y'_T$ assumes around them. In other words, the additional information that $y'_T$ carries along is not just describing the Kechris-Woodin ranking of $y_T$ but also its solvable ranking. Recall that thanks to Theorem \ref{thm:diffranking} we know that the solvable ranking of any function in $SV$ is always greater or equal to its Kechris-Woodin ranking. 

Formally, the only last element that is left to prove concerning function $y_T$ is indeed the correct association of the limsup ranking with the solvable ranking for any countable ordinal. This is expressed by a following proposition. Nonetheless, we first state an observation that is also relevant for the proof of such a proposition. 

\begin{remark}
Let $y \in SV$ be a function such that $y(0)=y(1)=y'(0)=y'(1)=0$. For all $[a,b] \subseteq [0,1]$ we have $ \left\vert \frac{1}{4} y[a,b] \right\vert_{SV} = \left\vert y \right\vert_{SV}$
\end{remark}

\begin{proof}
To see that this remark is true, recall that the solvable ranking of $y$ is concerned with the discontinuities of the derivative $y'$ and its sequence of $y'$-removed sets on $[0,1]$. Recall also that, by definition, $\frac{1}{4} y[a,b](x)=0$ if $x \in [0,1] \setminus [a, b]$, and for $x \in[a, b]$ we have $\frac{1}{4}y[a, b](x)= \frac{1}{4} (b-a) y\left(\frac{x-a}{b-a}\right)$. Hence, since $y(0)=y(1)=y'(0)=y'(1)=0$ by hyphotesis, the function $\frac{1}{4} y[a,b]$ is differentiable. Moreover, $\frac{1}{4}y'[a,b](x)=0$ if $x \in [0,1] \setminus [a, b]$ and for $x \in[a, b]$ we have $(\frac{1}{4} y[a, b])'(x)= \frac{1}{4} y'\left(\frac{x-a}{b-a}\right)$. Therefore, each point of discontinuity $x$ of $y'$ on $[0,1]$ has a biunivocal correspondence with a point of discontinuity $\frac{x-a}{b-a}$ of $(\frac{1}{4} y[a, b])'$ on $[a,b]$. This fact easily extends to further levels of the sequence of $y'$-removed sets on $[0,1]$ which correspond in the same way to levels of the sequence of $(\frac{1}{4}y[a, b])'$-removed sets on $[0,1]$ yielding the same solvable ranking for the two functions. 
\end{proof}

We can now present the proposition mentioned above. 

\begin{proposition}
\label{prop:unboundedsolv}
For all $T \in WF$ with $\left\vert T \right\vert_{ls}=\alpha$ the function $y_T$ defined as above satisfies $ \left\vert y_T \right\vert_{SV}= \alpha +1$.
\end{proposition}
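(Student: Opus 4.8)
The plan is to argue by transfinite induction on the rank of $T$, establishing that $|y_T|_{SV} = |T|_{ls}+1$. Write $r_n = |T_n|_{ls}$, so that the recursion reads $|T|_{ls} = \max(\sup_n r_n,(\limsup_n r_n)+1)$, and let $\{E_\gamma\}_\gamma$ be the sequence of $y_T'$-removed sets on $[0,1]$. For the base cases, the empty tree gives the constant $0$ (rank $1 = 0+1$), while a single root gives $y_T = p$, whose derivative is discontinuous exactly at $\{0,1\}$ and satisfies $p'(0)=p'(1)=0$, so that $p'|_{\{0,1\}}$ is continuous and $|y_T|_{SV} = 2 = |T|_{ls}+1$ (compare Example~\ref{ex:firstex}). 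For the inductive step I assume $|y_{T_n}|_{SV} = r_n+1$ for every child and must locate the least ordinal at which $\{E_\gamma\}$ empties.

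The analysis splits $[0,1]$ into the interiors of the removed intervals $I_n=(a_n,b_n)$ and the Cantor set $C$. On each $I_n$ the function $p$ is smooth, so near any point of $(a_n,b_n)$ the derivative $y_T'$ differs from $\frac14(y_{T_n}[I_n])'$ only by the continuous term $p'$; hence the discontinuity structure of $y_T'$ inside $I_n$ is exactly that of $\frac14 y_{T_n}[I_n]$. By the preceding Remark together with the induction hypothesis one has $|\tfrac14 y_{T_n}[I_n]|_{SV} = r_n+1$, and the affine rescaling $x\mapsto \tfrac{x-a_n}{b_n-a_n}$ carries $E_\gamma\cap(a_n,b_n)$ onto $E^{(n)}_\gamma\cap(0,1)$, where $E^{(n)}_\gamma$ denotes the $y_{T_n}'$-removed sets. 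Thus the interior content of $I_n$ survives exactly up to level $r_n$ and is gone at $r_n+1$; unioning over $n$ shows that the interiors keep the removed sets nonempty up to $\sup_n r_n$.

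It remains to track $C$. The key facts are that $y_T'|_C = p'|_C$ (the rescaled copies vanish to second order at the endpoints of each $I_n$, hence do not alter the derivative on $C$) and that $p'|_C$ is continuous except at $0$ and $1$, where the intrinsic $x^2\sin(1/x)$ oscillation of $p$ lives. A point $c\in C$ enters $E_1$ as soon as active intervals accumulate at it, since each $\frac14 y_{T_n}[I_n]$ carries a derivative oscillation of amplitude bounded below near the edges of $I_n$; stage by stage, the induction hypothesis applied inside the clustering intervals shows that $c$ persists in $\{E_\gamma\}$ exactly as long as the ranks of the surrounding intervals allow, so that the Cantor content is exhausted at level $\limsup_n r_n$. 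At that stage the hierarchy has collapsed onto a subset of $C$ carrying no further interior oscillation, and one applies the removed-set operator to $p'|_C$ itself: its only discontinuities are $0$ and $1$, which therefore survive one additional level before vanishing, yielding a last nonempty removed set at $(\limsup_n r_n)+1$. Combining the two contributions, $E_\gamma$ is nonempty for every $\gamma\le\max(\sup_n r_n,(\limsup_n r_n)+1)=|T|_{ls}$ and empty at $|T|_{ls}+1$, as claimed.

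I expect the main obstacle to be the Cantor-set bookkeeping of the previous paragraph: one must show, uniformly across successor and limit stages and working in the subspace topology of each $E_\gamma$, that the accumulation of rescaled discontinuities forces a Cantor point to survive to exactly the limsup of the surrounding interval ranks and no further, and that the endpoint oscillation of $p$ then supplies precisely the single extra level matching the $+1$ in the limsup recursion. The delicate auxiliary points are the limit stages, where $E_\lambda=\bigcap_{\gamma<\lambda}E_\gamma$, and the verification that restricting $y_T'$ to $E_\gamma$ creates no spurious discontinuities at the interval endpoints $a_n,b_n$, so that the interior and Cantor analyses remain decoupled throughout the recursion.
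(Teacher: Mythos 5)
Your overall architecture---induction over the well-founded tree, splitting $[0,1]$ into the interiors of the removed intervals $I_n$ and the Cantor set $C$, the identity $y_T' \restriction_C = p' \restriction_C$, and the final accounting via $\max\left(\sup_n r_n, (\limsup_n r_n)+1\right)$---matches the paper's proof in outline, and your choice to induct structurally on the tree (so the hypothesis applies to \emph{every} child, including children with $\left\vert T_n \right\vert_{ls} = \left\vert T \right\vert_{ls}$) is legitimate and would even streamline the paper's hardest case: the paper inducts on the limsup rank instead, and must therefore handle maximal-rank children by an inner recursion (its set $Q$ of ``second case'' subtrees) terminated by well-foundedness. However, the step you defer to your last paragraph is not a routine verification---it \emph{is} the proof. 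Determining exactly which points of $C$ lie in $E_\gamma$ at each stage is where all the content sits: the paper shows, via accumulation of interval content combined with closedness of the removed sets, that when $\sup_n r_n < \alpha$ one gets $C \subseteq E_\beta$ for $\beta = \limsup_n r_n$ (and at limit $\beta$, $C = \bigcap_{\gamma<\beta} E_\gamma = E_\beta$), that $y_T' \restriction_{E_\beta}$ is continuous on every $I_n$, and hence $E_\alpha = \{0,1\}$ and $E_{\alpha+1} = \emptyset$. Your sketch asserts the corresponding survival facts (``$c$ persists \ldots exactly as long as the ranks of the surrounding intervals allow'') without argument, and you yourself flag this as the main obstacle; as submitted, the proposal reduces the proposition to precisely the claim that needs proving.

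Beyond incompleteness, your single mechanism---Cantor content exhausted at $\limsup_n r_n$, after which $\{0,1\}$ ``supplies precisely the single extra level''---misdescribes the case $\sup_n r_n = \alpha$. (Note first that if $\left\vert T \right\vert_{ls} = \alpha$ and the sup equals $\alpha$, it must be \emph{attained}, and by only finitely many children, since otherwise $\limsup_n r_n = \alpha$ would force $\left\vert T \right\vert_{ls} \geq \alpha+1$; your ``interiors nonempty up to $\sup_n r_n$'' silently needs this, because at a non-attained supremum the interval contribution can vanish at the limit level itself, and correctness there rests on $\limsup = \sup$ so that the Cantor term dominates the max.) In that attained case the paper shows the last nonempty set $E_\alpha$ lies \emph{inside} the intervals $I_i$ of the maximal-rank children---the endpoints $a_i, b_i$ and the countable sets $L_i$---with $\{0,1\} \cap E_\alpha = \emptyset$, and the emptiness of $E_{\alpha+1}$ then follows from continuity of $p' \restriction_{E_\alpha}$ away from $\{0,1\}$, not from the endpoint oscillation of $p$. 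So the ``$+1$'' in your max formula has two genuinely different sources in the two halves of the max, and nonemptiness at the top level must be argued separately in each; this is exactly what the paper's three-way case analysis ($\beta$ successor, $\beta$ limit, sup attained) provides and what your proposal leaves open. To repair it, carry out your structural induction but replicate the paper's accumulation-plus-closedness argument for $C \subseteq E_\beta$ in the non-attained case, and the localization of $E_\alpha$ away from $\{0,1\}$ in the attained case, checking along the way (as you correctly anticipate) that the subspace topology of each $E_\gamma$ does not couple the interval and Cantor analyses at the points $a_n, b_n$.
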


\begin{proof}
We prove the proposition by transfinite induction. The base case has already been proved in the above discussion. As induction hypothesis let us now fix some $1< \alpha < \omega_1$ and assume that, for all $T \in WF$ with $\left\vert T \right\vert_{ls}= \gamma$ and $\gamma< \alpha$, we have $\left\vert y_T \right\vert_{SV}=\gamma+1$. 

Let us now consider, for all $T \in WF$ with $\left\vert T \right\vert_{ls}=\alpha$, the expression of function $y_T$, which is, by definition $y_T(x)= p(x) + \frac{1}{4} \sum_{n=0}^{\infty} y_{T_n}[I_n](x)$ for all $x \in [0,1]$. Our goal is to prove that $\left\vert y_T \right\vert_{SV}= \alpha +1$ which means, by definition of the solvable ranking, proving that $E_{\alpha} \neq \emptyset$ and $E_{\alpha+1}= \emptyset$, where as usual the notation $\{ E_{\gamma} \}_{\gamma<\omega_1}$ stands for the sequence of $y'_T$-removed sets on $[0,1]$. Note now that by definition of the limsup rank over well-founded trees, which is always a successor ordinal, letting $\left\vert T \right\vert_{ls}=\alpha=\beta+1$ implies two possible scenarios:

\begin{enumerate}
\item $\sup_n \left\vert T_n \right\vert_{ls}< \alpha$ and $\limsup_n \left\vert T_n \right\vert_{ls}= \beta$
\item  $\sup_n \left\vert T_n \right\vert_{ls}= \alpha$
\end{enumerate}

So let us analyze each of these scenarios separately: 

\begin{enumerate}

\item $\sup_n \left\vert T_n \right\vert_{ls}< \alpha$ and $\limsup_n \left\vert T_n \right\vert_{ls}= \beta$ \\ 

In this first case, it is necessary to make a distinction on whether $\beta$ is a successor ordinal or a limit ordinal. This once again leads to a distinction by cases: 

\begin{enumerate}
\item $\beta$ is a successor ordinal: \\

We then infer that there exists a $n_\beta \in \mathbb{N}_0$ such that all the subtrees $T_i$ for all $i \geq n_\beta$ satisfy $\left\vert T_i \right\vert_{ls}= \beta$ while there does not exist a $j \in \mathbb{N}_0$ such that $\left\vert T_j \right\vert_{ls}= \alpha$.  By induction hyphotesis we know that for all the subtrees $T_i$ for all $i \in \mathbb{N}_0$ such that $\left\vert T_i \right\vert_{ls}= \gamma$ with $\gamma< \alpha$ we have $\left\vert y_{T_i} \right\vert_{SV}= \gamma +1$. Therefore, this means that for all the subtrees $T_i$ for all $i \geq n_\beta$ we have $\left\vert y_{T_i} \right\vert_{SV}= \alpha$. Hence, due to the remark above, we have that $\left\vert \frac{1}{4} y_{T_i}[I_i] \right\vert_{SV}= \alpha$ for all $i \geq n_\beta$. Since $y'_T= p' + \frac{1}{4} \sum_{n=0}^{\infty} y'_{T_n}[I_n]$ and $p'$ is continuous on all $\operatorname{cl}(I_n)$ for all $n \in \mathbb{N}_0$, this means that for all $i \geq n_\beta$ we have $\operatorname{cl}(I_i) \cap E_\beta \neq \emptyset$ and $\operatorname{cl}(I_i) \cap E_\alpha = \emptyset$. 

Therefore, each point of the Cantor set $C$ is either in $E_\beta$ or is an accumulation point for $E_\beta$ and, since $E_\beta$ is a closed set, this means $C \subseteq E_\beta$. Let us now analyze the behavior of  $y'_T \restriction_{E_\beta}$. We observe that $y'_T \restriction_{E_\beta}$ is continuous on all $I_n$ for all $n \in \mathbb{N}_0$. Indeed, we have just shown that for all $i \geq n_\beta$ we have $I_i \cap E_\alpha = \emptyset$ and so that implies continuity of $y'_T \restriction_{E_\beta}$ on such intervals. Instead, for all $n < n_\beta$, we know by hyphosesis that $\left\vert y_{T_n} \right\vert_{SV} < \alpha$ which implies $I_n \cap E_\beta = \emptyset$. We are now left to consider the behavior of $y'_T \restriction_{E_\beta}$ on the Cantor set $C$. It is easy to see that $y'_T \restriction_{C}=p'\restriction_{C}$. Indeed each other term in the sum equals zero on the Cantor set by the definition of each function $y'_{T_n}[I_n]$ for all $n \in \mathbb{N}_0$. But since $p'\restriction_{C}$ is discontinuous on the set $\{ 0,1 \}$, this means that $y'_T \restriction_{E_\beta}$ is discontinuous on the set $\{ 0,1 \}$ and so $E_\alpha =\{ 0,1 \}$. Note that by definition we know that $y'_T \restriction_{E_\alpha}(0)=y'_T \restriction_{E_\alpha}(1)=0$. It follows $E_{\alpha+1} = \emptyset$ and so $\left\vert y_T \right\vert_{SV}= \alpha +1$. \\

\item $\beta$ is a limit ordinal: \\

We then infer that for all ordinals $\gamma < \beta$ there exists an ordinal $\gamma<\delta< \beta$ such that $T_{i_\delta}$ satisfy $\left\vert T_{i_\delta} \right\vert_{ls}= \delta$ while there does not exist a $j \in \mathbb{N}_0$ such that $\left\vert T_j \right\vert_{ls}= \alpha$.  By induction hyphotesis we know that for all the subtrees $T_i$ for all $i \in \mathbb{N}_0$ such that $\left\vert T_i \right\vert_{ls}= \gamma$ with $\gamma< \alpha$ we have $\left\vert y_{T_i} \right\vert_{SV}= \gamma +1$. Therefore, this means that for all ordinals $\gamma < \beta$ there exists an ordinal $\gamma<\delta< \beta$ such that $T_{i_{\delta}}$ satisfy $\left\vert y_{T_{i_{\delta}}} \right\vert_{SV}= \delta+1$. Hence, due to the remark above, we have that for all ordinals $\gamma < \beta$ there exists an ordinal $\gamma<\delta< \beta$ such that $\left\vert  \frac{1}{4} y_{T_{i_{\delta}}}[I_{i_{\delta}}] \right\vert_{SV}= \delta+1$. Since $y'_T= p' + \frac{1}{4} \sum_{n=0}^{\infty} y'_{T_n}[I_n]$, $p'$ is continuous on all $\operatorname{cl}(I_n)$ for all $n \in \mathbb{N}_0$, and all the intervals are disjoint, this means that for all ordinals $\gamma < \beta$ there exists an ordinal $\gamma<\delta< \beta$ such that $\operatorname{cl}(I_{i_{\delta}}) \cap E_\delta \neq \emptyset$ and $\operatorname{cl}(I_{i_{\delta}}) \cap E_\alpha = \emptyset$. Note that this also implies $\operatorname{cl}(I_{i}) \cap E_\alpha = \emptyset$ for all $i \in \mathbb{N}_0$.


What has been described above implies exactly that for the Cantor set we have $C= \bigcap_{\gamma <\beta} E_\gamma$ which by definition means that $C=E_\beta$. Let us now analyze the behavior of  $y'_T \restriction_{E_\beta}$. We observe that $y'_T \restriction_{E_\beta}$ is continuous on all $I_n$ for all $n \in \mathbb{N}_0$. Indeed, we have just shown that for all $i \in \mathbb{N}_0$ we have $I_i \cap E_\alpha = \emptyset$ and so that implies continuity of $y'_T \restriction_{E_\beta}$ on such intervals. We are now left to consider the behavior of $y'_T \restriction_{E_\beta}$ on the Cantor set $C$. It is easy to see that $y'_T \restriction_{C}=p'\restriction_{C}$. Indeed each other term in the sum equals zero on the Cantor set by the definition of each function $y'_{T_n}[I_n]$ for all $n \in \mathbb{N}_0$. But since $p'\restriction_{C}$ is discontinuous on the set $\{ 0,1 \}$, this means that $y'_T \restriction_{E_\beta}$ is discontinuous on the set $\{ 0,1 \}$ and so $E_\alpha =\{ 0,1 \}$. Note that by definition we know that $y'_T \restriction_{E_\alpha}(0)=y'_T \restriction_{E_\alpha}(1)=0$. It follows $E_{\alpha+1} = \emptyset$ and so $\left\vert y_T \right\vert_{SV}= \alpha +1$. 

\end{enumerate}






\item  $\sup_n \left\vert T_n \right\vert_{ls}= \alpha$ \\

In this second case instead, we infer that there exists a finite set $K \subset \mathbb{N}_0$ such that all the subtrees $T_i$ for all $i \in K$ satisfy $\left\vert T_i \right\vert_{ls}= \alpha$ while for all the other subtrees $T_i$ for all $i \in \mathbb{N}_0 \setminus K$ we have $\left\vert T_i \right\vert_{ls} \leq \beta$. By induction hyphotesis we know that for all the subtrees $T_i$ for all $i \in \mathbb{N}_0$ such that $\left\vert T_i \right\vert_{ls}= \gamma$ with $\gamma< \alpha$ we have $\left\vert y_{T_i} \right\vert_{SV}= \gamma +1$. Therefore, this means that for all the subtrees $T_i$ for all $i \in \mathbb{N}_0 \setminus K$ we have $\left\vert y_{T_i} \right\vert_{SV} \leq \alpha$. Hence, due to the remark above, this implies that for all $i \in \mathbb{N}_0 \setminus K$ we have $\left\vert \frac{1}{4} y_{T_i}[I_i] \right\vert_{SV} \leq \alpha$. Since function $p'$ is discontinuous only on 0 and 1, this means by definition of $y_T$ that, for all intervals $I_i$ for all $i \in \mathbb{N}_0 \setminus K$ we have $\operatorname{cl}(I_i) \cap E_\alpha = \emptyset$. 

Therefore, we only consider the intervals $I_i$ for all $i \in K$, for which we know that $\left\vert T_i \right\vert_{ls}= \alpha$. But for all $i \in K$, if $\left\vert T_i \right\vert_{ls}= \alpha$, that means that once again for each tree $T_i$ we only have the two possibilities mentioned in the list above. Let us call $Q \subseteq K$ the set of trees $T_i$ with $\left\vert T_i \right\vert_{ls}= \alpha$ and that belong to the second case. We have just proved above that for all $i \in K \setminus Q$ the trees $T_i$ that belong to the first case of the list satisfies $\left\vert y_{T_i} \right\vert_{SV}= \alpha +1$ and so $\left\vert \frac{1}{4} y_{T_i}[I_i] \right\vert_{SV}= \alpha+1$. Moreover, since $p'$ is continuous on $I_i$, by looking carefully at the analysis of the previous case in the point above it is clear that each of these intervals contributes to two points in $E_{\alpha}$, i.e.\ the rational extremes of the interval $I_i$ which belong to $C$. Let us now analyze what happens to the intervals $I_i$ for all $i \in Q$. For each of these subtrees, we can repeat the same reasoning, but this process has to end because the original tree $T$ is well-founded. It is easy to see that supposing the process to end while always encountering only subtrees belonging to the second case leads to contradiction. Indeed, a tree formed by a single root has necessarily a limsup ranking equal to $1 < \alpha$. In other words, this implies that eventually every tree belonging to the second case has to reduce to a collection of subtrees belonging to the first case, which means that each interval $I_i$ for all $i \in Q$ contains a countable set $L_i \subset I_i$ of rational points belonging to $E_\alpha$. 

In conclusion, we have shown that $E_\alpha= \bigcup_{i \in Q} L_i \bigcup_{i \in K \setminus Q} \{ a_i, b_i \}$. Moreover, by the analysis of the first case done above it immediately follows that $y'_T \restriction_{E_\alpha}=p'\restriction_{E_\alpha}$. But since we know that for all $i \in Q$ we have $L_i \subset I_i$ we also know that $\{ 0,1 \} \cap E_\alpha = \emptyset$. By definition of $p'$ this impies that $y'_T \restriction_{E_\alpha}$ is continuous, and therefore $E_{\alpha+1} = \emptyset$. Hence, also in this case we have $\left\vert y_T \right\vert_{SV}= \alpha +1$. 

\end{enumerate}

\end{proof}

\subsection{Properties of the solvable ranking}

The goal of this section is now to describe the set descriptive complexity of the set of solvable functions $SV$ as a subset of $C([0,1])$. We show that similarly with the case of the set of differentiable functions $\mathcal{D}$, the set $SV$ is a $\boldsymbol{\Pi}_1^1$ subset of $C([0,1])$ which is not Borel. Moreover, we also show that the solvable ranking as introduced in Definition \ref{def:solvablerank} is a $\boldsymbol{\Pi}_1^1$ norm on $SV$. Both these conclusions are direct consequences of the results mentioned above; we express them with the two following propositions. 

\begin{proposition}
The set of solvable functions $SV$ is a $\boldsymbol{\Pi}_1^1$ subset of $C([0,1])$ which is not Borel. 
\end{proposition}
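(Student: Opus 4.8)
The plan is to prove the two assertions separately: that $SV$ is $\boldsymbol{\Pi}_1^1$, and that it is not Borel.

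For the upper bound, recall that $SV$ is by definition the set of $y \in \mathcal{D}$ whose derivative $y'$ is bounded and solvable. Since $\mathcal{D}$ is already $\boldsymbol{\Pi}_1^1$ and the class of $\boldsymbol{\Pi}_1^1$ sets is closed under countable intersections, it suffices to express the two extra requirements as $\boldsymbol{\Pi}_1^1$ conditions on $y \in \mathcal{D}$. The key point is that for $y \in \mathcal{D}$ the map $(y,x) \mapsto y'(x)$ is Borel, being the everywhere-defined pointwise limit of the continuous difference quotients; consequently every property of $y'$ obtained by quantifying over points and over closed sets with a Borel matrix stays within $\boldsymbol{\Pi}_1^1$. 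Boundedness is $\exists m\,\forall x\,|y'(x)|\le m$, which is $\boldsymbol{\Pi}_1^1$. For solvability, note first that every derivative is automatically of Baire class one, so the only genuine clause is that for every closed $K\subseteq[0,1]$ the discontinuity set of $y'\restriction_K$ be closed. Quantifying $K$ over the standard Borel space of closed subsets of $[0,1]$, and writing closedness of the discontinuity set as $\forall x\in K\,(x\in \operatorname{cl}(\operatorname{Disc}(y'\restriction_K)) \Rightarrow x\in \operatorname{Disc}(y'\restriction_K))$ with its Borel matrix, this clause is a universal quantification over closed sets and points of a Borel condition, hence $\boldsymbol{\Pi}_1^1$. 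Intersecting, $SV$ is $\boldsymbol{\Pi}_1^1$.

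For the lower bound I would exhibit a Borel reduction of $WF$ to $SV$ and invoke the $\boldsymbol{\Pi}_1^1$-completeness of $WF$. The candidate reduction is $T\mapsto y_T$ from Definition \ref{def:treefunction}. First I would observe that the defining recursion $y_T=p+\tfrac14\sum_n y_{T_n}[I_n]$ converges uniformly for \emph{every} tree, well-founded or not: at depth $k$ the contributions carry a factor bounded by $(1/12)^k$ while $\|p\|\le 1$, so $y_T$ is a well-defined continuous function and $T\mapsto y_T$ is continuous, in particular Borel, on the whole space of trees. By the proposition preceding Proposition \ref{prop:unboundedsolv} together with Proposition \ref{prop:unboundedsolv}, if $T\in WF$ then $y_T$ is differentiable with bounded solvable derivative, so $y_T\in SV$. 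It remains to prove the converse: if $T\notin WF$ then $y_T\notin SV$.

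The converse is the crux of the argument. Given an infinite branch $(n_0,n_1,\dots)$ of $T$, the associated nested rescaled removed intervals shrink to a single point $x^\ast$, around which $y_T$ reproduces, at every scale, a rescaled copy of the full construction attached to an ill-founded subtree. I would argue that this forces $y_T\notin SV$: either $y_T$ fails to be differentiable at $x^\ast$, in which case $y_T\notin\mathcal{D}\supseteq SV$ already; or $y_T$ is differentiable, and then one checks by transfinite induction that $x^\ast$ survives every stage of the sequence $\{E_\gamma\}$ of $y_T'$-removed sets, because the self-similar discontinuity structure around $x^\ast$ never terminates. Thus $E_\gamma\neq\emptyset$ for all $\gamma<\omega_1$, and by the contrapositive of Theorem \ref{thm:lastord} the derivative $y_T'$ cannot be solvable, so again $y_T\notin SV$. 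Establishing this non-termination rigorously (controlling the local behaviour at the limit of an infinite branch, in the spirit of the differentiability analysis of \cite{westrick2014lightface}) is the main obstacle. Once it is in place, $T\mapsto y_T$ is a Borel reduction of the $\boldsymbol{\Pi}_1^1$-complete set $WF$ to $SV$; hence $SV$ is $\boldsymbol{\Pi}_1^1$-hard and, being $\boldsymbol{\Pi}_1^1$, cannot be Borel, for otherwise $WF$ would be Borel. Alternatively, once the next proposition shows that $|\cdot|_{SV}$ is a $\boldsymbol{\Pi}_1^1$-norm, non-Borelness follows directly from the unboundedness of the norm established in Proposition \ref{prop:unboundedsolv} together with the boundedness theorem recorded in Proposition \ref{prop:coananorm}.
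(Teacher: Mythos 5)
Your proposal has two genuine gaps, one in each half. The more serious one is the lower bound. You correctly identify the converse of the reduction ($T \notin WF \Rightarrow y_T \notin SV$) as the crux and leave it unproven, but the obstacle is not merely technical: for this particular $y_T$ the claim is doubtful. Westrick's $g_T$ works as a reduction to $\mathcal{D}$ precisely because the bracket operation rescales the argument but not the slope ($g[a,b]'(x)=g'\bigl(\frac{x-a}{b-a}\bigr)$), so along an infinite branch the derivative oscillations do not decay and differentiability fails at the limit point. The paper's Definition \ref{def:treefunction} inserts a damping factor $\frac14$ at every level, so the depth-$k$ contribution to $y'_T$ has size $O(4^{-k})$. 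The limit point $x^\ast$ of an infinite branch lies in the interior of a removed interval of every ancestral Cantor copy, so each ancestral term is smooth at $x^\ast$, the series of derivative contributions converges geometrically near $x^\ast$, and nothing visibly prevents $y_T$ from being differentiable with $y'_T$ continuous at $x^\ast$; in particular $x^\ast$ need not even belong to $E_1$, which undercuts your mechanism ``$x^\ast$ survives every stage $E_\gamma$.'' (Note also that since the $E_\gamma$ are nested closed sets, the sequence always stabilizes at some countable stage; failure of solvability would have to be located in the definition itself, i.e.\ some closed $K$ with non-closed discontinuity set, not in transfinite non-termination per se.) The paper does not attempt this reduction at all: it obtains both conclusions at once from a rank criterion of \cite{kechris1986ranks}, taking $\varphi = \left\vert \cdot \right\vert_{KW} \restriction_{SV}$, whose comparison relation is $\boldsymbol{\Sigma}_1^1$ on $C([0,1])^2$ because the Kechris-Woodin ranking is already a $\boldsymbol{\Pi}_1^1$ norm on the ambient set $\mathcal{D}$, and whose range on $SV$ is uncountable by Theorem \ref{thm:diffranking} combined with Proposition \ref{prop:unboundedsolv}. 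Your fallback — invoking the next proposition, that $\left\vert \cdot \right\vert_{SV}$ is a $\boldsymbol{\Pi}_1^1$ norm — is circular in the paper's logical order: a $\boldsymbol{\Pi}_1^1$ norm is by definition a norm on a $\boldsymbol{\Pi}_1^1$ set, so that proposition presupposes the present one; the paper sidesteps this by using the KW norm instead.

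The upper bound also has a gap as written. You assert the matrix of your quantifiers is Borel, but ``$x \in \operatorname{Disc}(y'\restriction_K)$'' unwinds to $\exists \epsilon\, \forall \delta\, \exists z \in K \cap B(x,\delta)\, \left( \left\vert y'(z)-y'(x) \right\vert \geq \epsilon \right)$, and the inner existential ranges over real points $z$ of $K$ with a matrix that is only Borel — not closed, since $y'$ is itself discontinuous — so this predicate is on its face $\boldsymbol{\Sigma}_1^1$, not Borel (compactness of $K$ does not help when the matrix is not closed in $z$). The closedness clause $\forall x\,\bigl(x \in \operatorname{cl}(\operatorname{Disc}) \Rightarrow x \in \operatorname{Disc}\bigr)$ then has the form $\forall x\,(\boldsymbol{\Sigma}_1^1 \Rightarrow \boldsymbol{\Sigma}_1^1)$, and after the universal quantification over closed $K$ the whole expression lands in $\boldsymbol{\Pi}_2^1$, not $\boldsymbol{\Pi}_1^1$, absent a further idea (for instance replacing oscillation of $y'$ by difference-quotient oscillations over rationals, in the style of the Kechris-Woodin operator). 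The paper never performs this direct computation: its $\boldsymbol{\Pi}_1^1$-ness claim also flows through the norm criterion above. So neither half of your argument closes as stated, and the reduction half appears to fail structurally, not just to be incomplete.
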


\begin{proof}
To show the statement of the proposition we make use of a consequence of Theorem \ref{thm:diffranking} and Proposition \ref{prop:unboundedsolv} i.e.\ the fact that the $\boldsymbol{\Pi}_1^1$ norm generated by Kechris-Woodin ranking is uncountable on the set of solvable functions $SV$. Indeed, since $SV \subset \mathcal{D}$ and $\mathcal{D}$ is a $\boldsymbol{\Pi}_1^1$ subset of $C([0,1])$, in order to show that $SV$ is $\boldsymbol{\Pi}_1^1$ but not Borel it is enough to construct a norm $\varphi: SV \rightarrow$ ORD such that:
\begin{enumerate}
\item there is a $\boldsymbol{\Sigma}_1^1$ relation $\prec$ over $C([0,1])^2$ such that:
 
$$x, y \in SV \Longrightarrow[\varphi(x)<\varphi(y) \Longleftrightarrow x \prec y]$$

\item the range of $\varphi$ is uncountable.
\end{enumerate}

Therefore, if we choose as such norm the Kechris-Woodin ranking restricted to $SV$, i.e.\ $\varphi = \left\vert \cdot \right\vert_{KW} \upharpoonright {SV}$ then the first item is immediately satisfied since such ranking is a $\boldsymbol{\Pi}_1^1$ norm on $\mathcal{D}$. Instead, the second item is satisfied as an immediate consequence of Theorem \ref{thm:diffranking} combined with Proposition \ref{prop:unboundedsolv}. 
\end{proof}

\begin{proposition}
The solvable ranking is a $\boldsymbol{\Pi}_1^1$ norm on $SV$. 
\end{proposition}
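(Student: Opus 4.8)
The plan is to exhibit a Borel assignment $y \mapsto T_y$ of candidate trees that reduces membership in $SV$ to membership in $WF$ and whose tree rank reproduces the prewellordering induced by the solvable ranking; the conclusion then follows from the abstract principle recalled in the preliminaries, namely that for any Borel map $y \mapsto T_y$ with $y \in SV \Longleftrightarrow T_y \in WF$ the assignment $y \mapsto \vert T_y \vert_{ls}$ is a $\boldsymbol{\Pi}_1^1$ norm on $SV$. Since the property of being a $\boldsymbol{\Pi}_1^1$ norm depends only on the induced prewellordering $\leqslant_\varphi$, it is invariant under passing to an equivalent norm; hence, provided the tree rank $\vert T_y \vert_{ls}$ is a strictly increasing function of $\vert y \vert_{SV}$ across all of $SV$, the two norms are equivalent and the solvable ranking inherits the $\boldsymbol{\Pi}_1^1$ norm property. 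Note that $SV$ is already known to be $\boldsymbol{\Pi}_1^1$ by the previous proposition, so only the reduction has to be built.

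First I would construct the tree. The removed-set sequence $\{E_\gamma\}$ of Definition \ref{def:removed} is a Cantor--Bendixson-style derivative: $E_{\beta+1}$ is the discontinuity set of $y'\restriction_{E_\beta}$ and $E_\lambda = \bigcap_{\beta<\lambda} E_\beta$ at limits. Mirroring this, I would let the nodes of $T_y$ be finite sequences $\bigl((B_0,\epsilon_0),\dots,(B_k,\epsilon_k)\bigr)$ of pairs, each $B_i$ a rational open ball in $[0,1]$ and $\epsilon_i \in \mathbb{Q}^+$, subject to $\operatorname{cl}(B_{i+1}) \subset B_i$, $\operatorname{diam}(B_i) < 2^{-i}$, and the requirement that the pair $(B_i,\epsilon_i)$ witnesses, relative to the points surviving to depth $i$, a discontinuity jump of the derivative of size at least $\epsilon_i$; concretely this is phrased through the existence of difference-quotient pairs of $y$ inside $B_i$ that differ by at least $\epsilon_i$, in the spirit of the validity conditions (1)--(2) of the $(\alpha)$Monkeys approach (Definition \ref{def:alphamonkeys}). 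The crucial design feature is that a branch of length $k$ through $T_y$ ending near a point $x$ certifies $x \in E_k$, so that $T_y$ is ill-founded exactly when some point survives every finite level, i.e.\ when the derivative sequence never reaches $\emptyset$. Thus $T_y \in WF \Longleftrightarrow y \in SV$ (within $\mathcal{D}$), and the branching structure, which reflects accumulation of witnesses of unbounded surviving level, is engineered so that $\vert T_y \vert_{ls}$ equals $\vert y \vert_{SV}$ up to the fixed successor shift already seen in Proposition \ref{prop:unboundedsolv}; in particular $\vert T_y \vert_{ls}$ is a strictly increasing function of $\vert y \vert_{SV}$.

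Next I would check Borelness of $y \mapsto T_y$. For $y \in \mathcal{D}$ the derivative is recovered pointwise as $y'(x) = \lim_n n\bigl(y(x+1/n) - y(x)\bigr)$, so $(y,x) \mapsto y'(x)$ is Borel on $\mathcal{D} \times [0,1]$; consequently the membership of a fixed finite node in $T_y$, which refers only to rational balls, rational thresholds, and difference quotients of $y$ (never to the sets $E_\beta$ as external parameters, since the entire descent is internalized in the node), is a Borel condition in $y$. Hence $y \mapsto T_y$ is Borel, as required by the abstract principle.

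The main obstacle is the faithfulness of the rank correspondence, in particular at limit levels. I expect the delicate verification to be that the nested-ball witnesses track the transfinite derivative correctly where $E_\lambda = \bigcap_{\beta<\lambda}E_\beta$: one must show, using compactness of $[0,1]$ and the closedness of each $E_\beta$ (guaranteed by solvability through Definition \ref{def:solvable}), that a point lies in $E_\lambda$ precisely when it admits witnessing descents of every level $\beta < \lambda$, so that the limsup behaviour of the tree rank matches the intersection step exactly; this is the analogue of scenario (1)(b) and the limit analysis in the proof of Proposition \ref{prop:unboundedsolv}. Once this correspondence is established, $\vert T_y \vert_{ls}$ and $\vert y \vert_{SV}$ induce the same prewellordering, the abstract principle makes $y \mapsto \vert T_y \vert_{ls}$ a $\boldsymbol{\Pi}_1^1$ norm on $SV$, and invariance under equivalence yields that the solvable ranking is itself a $\boldsymbol{\Pi}_1^1$ norm, completing the proof.
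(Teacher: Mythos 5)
Your overall strategy is sound in the abstract, but it is the opposite direction from the paper's, and the part you leave as a sketch is exactly where the proof lives. The paper does not build any new reduction $y \mapsto T_y$: it reuses the Borel map $T \mapsto y_T$ of Definition \ref{def:treefunction} together with Proposition \ref{prop:unboundedsolv} (which gives $\left\vert y_T \right\vert_{SV} = \left\vert T \right\vert_{ls}+1$ exactly), and transfers the known $\boldsymbol{\Pi}_1^1$ norm $\psi(T)=\left\vert T \right\vert_{ls}+1$ on $WF$ across this map to conclude that $\left\vert \cdot \right\vert_{SV}$ is a $\boldsymbol{\Pi}_1^1$ norm on $SV$ — a few lines, because all the hard rank bookkeeping was already done. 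Your route (a Borel $y \mapsto T_y$ with $y \in SV \Leftrightarrow T_y \in WF$ and rank-faithfulness, plus the correct observation that the $\boldsymbol{\Pi}_1^1$-norm property depends only on the induced prewellordering) would also suffice \emph{if} the reduction existed, but you do not construct it, and the sketch you give fails in two concrete ways.

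First, you arrange that a node of depth $k$ certifies membership in $E_k$, so your $T_y$ is ill-founded precisely when some point survives all \emph{finite} levels, i.e.\ when $E_\omega=\bigcap_k E_k \neq \emptyset$. But solvability requires only $E_\alpha=\emptyset$ for \emph{some} $\alpha<\omega_1$, and Proposition \ref{prop:unboundedsolv} itself supplies solvable functions with $\left\vert y \right\vert_{SV}=\alpha+1$ for every $\alpha<\omega_1$; for any such $y$ of rank above $\omega$ your tree is ill-founded while $y \in SV$, so the equivalence $T_y \in WF \Leftrightarrow y \in SV$ is false. Indexing transfinite stages by node depth cannot work; one needs a genuinely different device (as in the Kechris--Woodin tree for $\mathcal{D}$) in which an infinite branch produces a point of the largest fixed point of the derivation — this missing idea is the entire difficulty of your route. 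Second, your witnessing condition is phrased through difference-quotient oscillations of $y$ inside $B_i$, which is the Kechris--Woodin derivative, not the removed-set derivative of Definition \ref{def:removed}: $E_{\beta+1}$ consists of the discontinuity points of the \emph{restriction} $y'\restriction_{E_\beta}$, which compares values of $y'$ at points of $E_\beta$, and ambient difference-quotient oscillation in a ball can be large while $y'\restriction_{E_\beta}$ is continuous there. The paper proves this divergence is extreme — Theorem \ref{thm:diffranking}(2) gives functions with $\left\vert y \right\vert_{KW}=\alpha+1$ but $\left\vert y \right\vert_{SV}=2$ — so a tree ranked by your witnesses would track the Kechris--Woodin prewellordering, and your claimed strict monotonicity of $\left\vert T_y \right\vert_{ls}$ in $\left\vert y \right\vert_{SV}$ would fail. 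In short: right abstract scaffolding, but the reduction at its core is both unbuilt and, as specified, wrong on both the well-foundedness criterion and the derivative being tracked.
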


\begin{proof}
Given two Polish spaces $X, Y$, two $\boldsymbol{\Pi}_1^1$ sets $P \subseteq X, Q \subseteq Y$, a Borel function $f: X \rightarrow Y$ such that $x \in P \Leftrightarrow f(x) \in Q$ and given a norm $\varphi: Q \rightarrow$ ORD we can define a norm $\psi: P \rightarrow$ ORD by $\psi(x)=\varphi(f(x))$. It is well known \cite{kechris1986ranks} that if $\psi$ is a $\boldsymbol{\Pi}_1^1$ norm on $P$ then $\varphi$ is a $\boldsymbol{\Pi}_1^1$ norm on $Q$. We can now take as the two $\boldsymbol{\Pi}_1^1$ sets above the sets $WF$ and $SV$ respectively and make use of the Borel map $y_T: WF \to SV$ constructed in Definition \ref{def:treefunction}. Proposition \ref{prop:unboundedsolv} implies that, if we choose as norm on $SV$ the solvable ranking as defined in Definition \ref{def:solvablerank}, then the norm $\psi: WF \rightarrow$ ORD defined as $\psi(T)=\left\vert T \right\vert_{ls}+1$ satisfies $\psi(T) = \left\vert y_T \right\vert_{SV}$. Then, since the norm $\psi$ defined this way is a known $\boldsymbol{\Pi}_1^1$ norm, it follows that the solvable ranking is a $\boldsymbol{\Pi}_1^1$ norm on $SV$. 
\end{proof}

\section{Conclusion}

\label{sec:conclusions}

In this work we have expanded the study of solvable functions as introduced in our recent papers \cite{StacsBournezGozzi2024}, \cite{bournez2024solvable}. Specifically, we provided a ranking for the class of solvable functions that naturally translates to solvable dynamical systems in the form of IVPs ruled by solvable ODEs. With Proposition \ref{prop:unboundedsolv} we showed that for each countable ordinal, there exists at least one solvable function with ranking equal to such ordinal, therefore formally proving one of the intuitive claims of \cite{StacsBournezGozzi2024}, \cite{bournez2024solvable}. Indeed, starting from the functions discussed in Proposition \ref{prop:unboundedsolv} and built with the help of Definition \ref{def:treefunction}, it is trivial to describe solvable IVPs in higher dimensions with the same ranking by using techniques similar to the one employed in Example \ref{ex:firstex}. Hence, this proves that the hierarchy of solvable IVPs introduced by our ranking is indeed fully populated, as already envisioned (albeit only in an intuitive manner) in \cite{StacsBournezGozzi2024}, \cite{bournez2024solvable}. 

Moreover, by limiting the scope of our ranking to the case of one-dimensional solvable functions, we illustrated several properties and comparisons with other rankings for differentiable functions presented in the literature. More precisely, we demonstrated that the set of one-dimensional solvable functions is a coanalytic subset of the space of continuous functions, and the solvable ranking introduces a coanalytic norm over such a subset. Moreover, we showed with Theorem \ref{thm:diffranking} that the solvable ranking dominates the Kechris-Woodin ranking, consequently dominating also the Denjoy ranking and the Zalcwasser ranking due to known results from \cite{ki1997denjoy}. Finally, an insightful implication of the results of this paper is the following fact: solving analytically IVPs involving discontinuous ODEs requires the totality of the countable ordinals, similarly to what has been already proved by \cite{dougherty1991complexity} for the case of Denjoy integration. 

To conclude, we outline three relevant possible directions for future work: 

\begin{enumerate}
\item The analysis of the set descriptive complexity of solvable functions performed in this paper could be further developed in another direction by considering its lightface complexity (i.e.\ constructivity). This could mean producing a finer-grained investigation on the complexity of each level of the hierarchy inspired by the existing treatments for differentiability and Denjoy integration described in \cite{westrick2014lightface} and \cite{Wes20} respectively. 
\item One interesting question to address could be proving that the set of solvable functions $SV$ is a $\boldsymbol{\Pi}_1^1$ complete subset of $C([0,1])$ as it is the case for $\mathcal{D}$. One possible way to prove such a statement could be to provide a one-to-one reduction of $SV$ to the set of well-founded trees $WF$ such as the one described for $\mathcal{D}$ in \cite{kechris1986ranks}. Practically, this means designing, in addition to the Borel encoding of well-founded trees into solvable functions produced by Definition \ref{def:treefunction}, a similar encoding for the other direction, i.e.\ from solvable functions to well-founded trees. Nonetheless, one difficulty to overcome to achieve such encoding is that every solvable function is also a differentiable function, and so the tree description of a solvable function needs to include a tree description of differentiability (such as the one in \cite{kechris1986ranks}) as well as the extra condition of solvability. 
\item Finally, having proved that the hierarchy of solvable IVPs introduced by our ranking is fully populated, an interesting next step could be to make use of the IVP described in Example \ref{ex:secondex} as a building block to produce, for each hyperarithmetical real, a simulation using a solvable IVP that reaches such real. This could be done by rescaling and composing the dynamics used in Example \ref{ex:secondex} in the spirit of continuous-time ODEs simulations presented in \cite{TAMC06}. A similar achievement could represent the equivalent of what has already been obtained in \cite{dougherty1991complexity} for Denjoy integration, producing a description for the lightface descriptive complexity of solvable function alternative to the one mentioned in point 1 of this list. Such investigation could therefore yield a continuous-time, dynamical characterization of ordinal computing using systems of discontinuous ODEs. 
\end{enumerate}

\bibliographystyle{plainurl}
\bibliography{perso,bournez}
\end{document}